\documentclass[aps,prx,twocolumn,superscriptaddress,nofootinbib,floatfix,10pt]{revtex4-2}

\usepackage[T1]{fontenc}
\usepackage[utf8]{inputenc}
\usepackage{amsmath,amssymb,amsthm,mathtools}
\usepackage{bm}
\usepackage{braket}
\usepackage{graphicx}
\usepackage{tikz}
\usepackage[dvipsnames]{xcolor}
\usepackage[colorlinks=true,citecolor=NavyBlue,linkcolor=NavyBlue,urlcolor=NavyBlue]{hyperref}
\usepackage{orcidlink}

\newtheorem{theorem}{Theorem}
\newtheorem{proposition}{Proposition}

\newtheorem{lemma}{Lemma}

\newcommand{\papertitle}{Exact quantification of nonlocal magic}
\hypersetup{pdftitle={Exact quantification of nonlocal magic},pdfauthor={Piotr Sierant}}
\newcommand{\Tr}{\operatorname{Tr}}

\newcommand{\cM}{\mathcal M}
\newcommand{\cP}{\mathcal P}
\newcommand{\cA}{\mathcal A}
\newcommand{\cC}{\mathcal C}

\begin{document}

\title{\papertitle}

\author{Piotr Sierant~\orcidlink{0000-0001-9219-7274}}
\email{piotr.sierant@bsc.es}
\affiliation{Barcelona Supercomputing Center, Barcelona 08034, Spain}

\begin{abstract}
Magic, or nonstabilizerness, is the resource that lifts Clifford circuits to universal quantum computation and has become a standard diagnostic of many-body states.  For a state shared between two parties, however, a basic question has remained open: how much of the magic resides in the correlations between the parties rather than in their local bases?  Isolating this \textit{nonlocal magic} requires minimizing over all local bases, an optimization that has so far resisted exact solution.  Here we solve it for the stabilizer fidelity: the nonlocal magic of every pure multiqubit state is the distance of its entanglement spectrum from the closest spectrum of Bell pairs.
The same quantity governs an apparently unrelated task: a family of states universally embezzles entanglement under local operations and classical communication if and only if its nonlocal magic diverges.  
The deciding property is not the amount of entanglement but the way the entanglement spectrum spreads its weight across factor-of-two windows of rank, so that critical chains and random-singlet states, with identical logarithmic entanglement scaling, carry unbounded and vanishing nonlocal magic, respectively.
Nonlocal magic thereby becomes an operationally meaningful property of quantum correlations, directly accessible to tensor-network simulations and, through entanglement spectroscopy, to experiments.
\end{abstract}

\maketitle

\section*{Introduction}
Entanglement~\cite{amico2008entanglement, Horodecki2009Entanglement} has become the organizing principle of quantum many-body physics.  Its scaling with subsystem size separates gapped from critical phases~\cite{RevModPhys.82.277}, its spectrum exposes topological order~\cite{LiHaldane2008} and encodes universal information about quantum critical points~\cite{CalabreseCardy2004, Calabrese2008spectrum}, and it decides whether a state admits an efficient tensor-network description~\cite{Schollwock2011}. Entanglement is not, however, the only aspect of quantumness a state can have.  A complementary perspective treats the state as a computational medium and asks whether it can power a quantum computation.
From this viewpoint entanglement must be supplemented by a second resource, magic, also called nonstabilizerness~\cite{LiuWinter2022}, which quantifies how far a state lies beyond the reach of Clifford circuits and the stabilizer states they prepare.  These circuits generate extensive entanglement yet remain efficiently simulable on a classical computer~\cite{Gottesman1998heis,PhysRevA.70.052328}, and it is magic that lifts them to universal quantum computation.

Magic is the resource that fault-tolerant architectures must distill and supply~\cite{Bravyi2005magic,PhysRevA.83.032317,Veitch2014}, it sets the cost of the classical simulation methods~\cite{PhysRevLett.115.070501, Bravyi2019LowRank} and it underlies provable quantum advantages~\cite{Zhang2024Advantage}.
Stabilizer R\'enyi entropy (SRE) and other nonstabilizer monotones have made it tractable~\cite{Leone2022SRE, Haug2023Monotones, Haug2024Algorithms}. Recent work has mapped how magic behaves in many-body systems in equilibrium~\cite{White2021CFT, Hoshino2026CFT, Oliviero2022Ising, Tarabunga2023ManyBody, Tarabunga2024Critical} and out of it~\cite{Rattacaso2023Quench, Turkeshi2025Spreading, Niroula2024MagicTransition, Tirrito2025Ergodic}, and well beyond condensed matter: in lattice gauge theories~\cite{Grieninger2026StringBreaking,Chen2026SU2}, dense neutrino gases~\cite{Hite2026Neutrino}, the decays of elementary particles~\cite{Banacki2026Higgs} and the electronic structure of molecules~\cite{Sarkis2026molecules}.

Magic and entanglement offer distinct, largely independent views of a quantum system~\cite{Gu2025Separation, Fux2024Separation, Frau2024MPS, Tirrito2024Flatness}.
While entanglement is invariant under local basis changes, magic is intrinsically basis dependent: the stabilizer states are defined by the Pauli operators of a fixed set of qubit axes. Hence, a rotation as simple as a single-qubit $T$ gate creates magic, and a product of such rotated qubits carries an extensive amount of nonstabilizerness without any entanglement.
For a state shared by two parties, however, basis dependence takes a new form. Nonstabilizerness monotones, including SRE, cannot distinguish magic associated with the local basis of either party from that encoded in their correlations.
The product of single-qubit magic states illustrates the problem: all of its magic stems from the local basis choice, and a change of local basis maps it to a computational-basis product state with no magic at all.

To isolate the component of magic carried by the correlations between the two parties, it was proposed to minimize magic over all local unitaries~\cite{Cao2025magical, Qian2025Nonlocal}. 
The resulting \emph{nonlocal magic} is the nonstabilizerness that no local change of basis can remove, and hence also the part that two separated parties cannot manufacture on their own.
Nonlocal magic has been connected to gravitational backreaction in holography~\cite{Cao2025magical}, linked to operator entanglement~\cite{Andreadakis2026Operator}, studied in quantum fields and scattering processes~\cite{Cepollaro2025Harvesting,Robin2026Scattering}, extended to fermionic Gaussian states~\cite{Iannotti2026Fermionic,Collura2026FreeFermion}, and measured at the two-qubit level on a superconducting processor~\cite{Ahmad2025Experimental}.

Removing the local basis dependence comes, however, at the price of a formidable optimization problem: the minimum runs over two local unitary groups with exponentially many continuous parameters, and it must be found for a magic measure that is itself a nontrivial function of the state.  
Despite several conjectures about its solution~\cite{Huang2026Spectral, Torre2026Spectrum, Liu2026Schmidt,
Franchini2026Schmidt}, this optimization has remained unsolved, and the nonlocal magic of a many-body state has eluded exact quantification. This leads to a natural question: \emph{Is the nonlocal magic an exactly computable property of a many-body state, and, if so, what does it measure?}

Here we answer both parts of this question for nonlocal magic, and the key turns out to be the stabilizer fidelity: optimized over local bases, it can be computed exactly, and what it measures is the shape of the entanglement spectrum.  We define the nonlocal min-relative entropy of nonstabilizerness (NMRE) by maximizing the stabilizer fidelity over all local unitaries, and we solve the nested optimization exactly for every pure multiqubit state and every, possibly unequal, bipartition. 
This is striking, because, as a magic measure, the stabilizer fidelity is among the hardest to evaluate: in a fixed basis it requires a search over all $2^{\Theta(N^2)}$ stabilizer states~\cite{PhysRevA.70.052328}, and exact algorithms currently reach only about nine qubits~\cite{Hamaguchi2024Handbook,Hamaguchi2025Faster}.
However, as a measure of nonlocal magic, stabilizer fidelity becomes simple: the optimization over local unitaries does not compound the difficulty but removes it.

The solution of the optimization problem rests on a simple fact, the \emph{quantization of stabilizer entanglement}: across any cut, a stabilizer state is, up to local Clifford operations, a definite number $k$ of Bell pairs, so its entanglement spectrum is flat with $2^k$ equal entries.  Only these \emph{Bell-pair sectors} compete. 
The ordered Schmidt coefficients enter only through their partial sums up to the \emph{dyadic} ranks $2^k=1,2,4,8,\ldots$, and a sorted computational-basis representative state attains the optimum and thereby fixes the value of NMRE (Fig.~\ref{fig:concept}).  Nonlocal magic is therefore the distance of the entanglement spectrum from the nearest Bell-pair sector, an explicit spectral quantity that can be read off from any tensor-network simulation and measured by entanglement spectroscopy. 
We  bound NMRE by the \emph{logarithm} of entanglement entropy and show that it controls the other measures of nonlocal magic: the nonlocal SRE is equivalent to it up to universal constant factors, and stabilizer extent and robustness are bounded by it. Nonlocal magic is thereby pinned down as one universal quantity, and the NMRE is its exactly solvable representative.

The exact solution then reveals that nonlocal magic controls a task that seems to have nothing to do with computation: entanglement embezzlement. Under local operations and classical communication (LOCC), entanglement can never be created, but it can be borrowed. From a suitable catalyst state, two parties can extract entanglement by LOCC while returning the catalyst almost unchanged~\cite{vanDam2003embezzling}, and a family of states is a \emph{universal embezzler} if this works for every target state, with an error that vanishes along the family. We prove that the two phenomena are governed by  one and the same spectral quantity: the largest Schmidt weight in any \emph{octave} of rank (throughout, an octave is a window of ranks differing by at most a factor of two). 
This weight decides universal  embezzlement~\cite{ZanoniTheurerGour2024}, and we prove that it fixes nonlocal magic up to a sharp universal factor.  A family of states universally embezzles entanglement if and only if its nonlocal magic diverges.  What decides both is not how much entanglement a state carries but how its Schmidt weight is spread across octaves.

The picture of many-body states this gives is unlike the one drawn by entanglement entropy. A random state, despite volume-law entanglement, packs its Schmidt weight into a few octaves and carries only a bounded amount of nonlocal magic.  A critical chain, with far less entanglement, spreads its weight over ever more octaves as it grows: its nonlocal magic increases without bound and it embezzles universally.  A random-singlet state has the same logarithmic entanglement scaling as a critical chain, yet it is nothing but a collection of Bell pairs and has no nonlocal magic at all.  While entanglement entropy cannot tell these three apart, nonlocal magic can.

  \begin{figure}[t]
   \centering
   \includegraphics[width=\columnwidth]{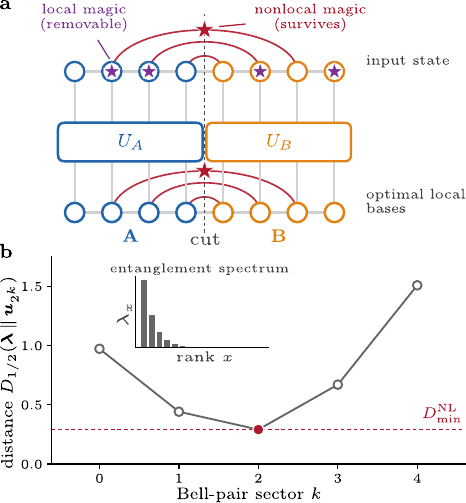}
   \caption{\textbf{Quantifying nonlocal magic.}
   \textbf{a}, A pure state $\ket{\Psi}$ of qubits shared between two
   parties $A$ and $B$ carries magic in two places: on individual qubits
   (purple stars), where a local change of basis removes it, and in the
   correlations across the cut (red star), where no local operation can.
   Maximizing the stabilizer fidelity over local unitaries
   $U_A\otimes U_B$ discards the former and defines the nonlocal
   min-relative entropy of nonstabilizerness $D_{\min}^{\rm NL}$,
   Eq.~\eqref{eq:nonlocal-fid-magic}.
   \textbf{b}, Across any cut, a stabilizer state is equivalent under
   local Clifford operations to $k$ Bell pairs, so its entanglement
   spectrum is flat with $2^k$ equal entries.  The nonlocal magic of
   $\ket{\Psi}$ is the R\'enyi-$1/2$ distance of its entanglement
   spectrum $\bm\lambda$ (inset) from the nearest of these Bell-pair
   sectors $k$.
   The minimum over $k$ (red) is $D_{\min}^{\rm NL}$.}
   \label{fig:concept}
  \end{figure}

\section*{Results}

\subsection*{Nonlocal magic from stabilizer fidelity}

Let $\cP_N:=\{I,X,Y,Z\}^{\otimes N}$ denote the $N$-qubit Pauli strings and let $\cC_N$ be the Clifford group, the unitaries that map Pauli strings to Pauli strings.  
Pure stabilizer states are the joint $+1$ eigenstates of $2^N$ commuting Pauli operators~\cite{Gottesman1997Stabilizer}, or equivalently the Clifford orbit $\cC_N\ket{0^N}:=\{C\ket{0^N}:C\in\cC_N\}$ of the computational-basis state $\ket{0^N}$.
For a pure state $\ket{\psi}$, the stabilizer fidelity $F_{\rm stab}(\ket{\psi}):=\max_{C\in\cC_N}|\bra{0^N}C^\dagger\ket{\psi}|^2$ measures how well $\ket{\psi}$ is approximated by a single stabilizer state, and $D_{\min}(\ket{\psi}):=-\log_2F_{\rm stab}(\ket{\psi})$ is its min-relative entropy of nonstabilizerness~\cite{Veitch2014,Bravyi2019LowRank, LiuWinter2022}.  
The quantity $D_{\min}$ vanishes exactly on stabilizer states, is a magic monotone, that is, nonincreasing under stabilizer operations~\cite{Veitch2014,LiuWinter2022}, is subadditive under tensor products~\cite{Bravyi2019LowRank}.

For a bipartition $A \cup B$ into $n_A$ and $n_B=N-n_A$ qubits, set $\nu:=\min(n_A,n_B)$ and define
\begin{equation}
 D_{\min}^{\rm NL}(\ket{\Psi})
 :=-\log_2\!\max_{U_A,U_B}
 F_{\rm stab}\!\left((U_A\otimes U_B) \ket{\Psi} \right).
 \label{eq:nonlocal-fid-magic}
\end{equation}
We call this quantity the nonlocal min-relative entropy of nonstabilizerness (NMRE) and $F_{\rm NL}( \ket{\Psi}):=2^{-D_{\min}^{\rm NL}(\ket{\Psi})}$ the nonlocal stabilizer fidelity.  The unitaries $U_A$ and $U_B$ are arbitrary unitaries on all qubits of $A$ and of $B$, respectively. Maximizing over them discards every local basis choice, so that what survives is magic that neither party can remove on its own (Fig.~\ref{fig:concept}a). The NMRE is by construction invariant under all local unitaries and vanishes on every product state.  Taken at face value, Eq.~\eqref{eq:nonlocal-fid-magic} is hopeless beyond a few qubits.  The unitaries range over groups of dimension $4^{n_A}$ and $4^{n_B}$, and for every choice of $U_A\otimes U_B$ the fidelity itself requires a search over the $2^{\Theta(N^2)}$ stabilizer states~\cite{PhysRevA.70.052328}.

The entanglement spectrum organizes the answer.  Set $d:=2^{\nu}$, identify each label $x\in\{0,\ldots,d-1\}$ with its binary string in $\mathbb F_2^{\nu}$, and let $\ket{\Psi}=\sum_{x=0}^{d-1}\mu_x\ket{a_x}_A\ket{b_x}_B$ with $\mu_0\ge\mu_1\ge\cdots\ge0$ be the zero-padded Schmidt decomposition, so that $\sum_x\mu_x^2=1$ and the probabilities $\bm\lambda=(\mu_x^2)_x$ form the entanglement spectrum.  Throughout, $\ket{x}_P$ denotes the $\nu$-bit encoding of $x$ on designated qubits of party $P=A,B$, tensored with $\ket{0}$ on every remaining qubit.  Our central result is that the nested optimization in Eq.~\eqref{eq:nonlocal-fid-magic} has a closed-form solution: nonlocal magic is set by the largest of $\nu+1$ fidelities, each determined by a partial sum of the ordered Schmidt coefficients.

\begin{theorem}[Exact solution]
\label{thm:main}
For every pure multiqubit state $\ket{\Psi}$ and every, possibly unequal,
bipartition,
\begin{equation}
 D_{\min}^{\rm NL}( \ket{\Psi} )
 =-\log_2\left[\,\max_{0\le k\le\nu}
 \frac{1}{2^k}
 \left(\sum_{x=0}^{2^k-1}\mu_x\right)^2\right].
 \label{eq:main-formula}
\end{equation}
Mapping the ordered Schmidt bases to the computational bases produces
the sorted computational-basis representative
$\ket{\Psi_{\rm CB}}:=\sum_{x=0}^{d-1}\mu_x\ket{x}_A\ket{x}_B$,
which globally maximizes $F_{\rm stab}$ over the local unitary orbit of $\ket{\Psi}$
and is simultaneously optimal at every allowed Schmidt rank.
\end{theorem}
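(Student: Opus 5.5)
We prove the two inequalities separately: the closed form is an upper bound on the nonlocal stabilizer fidelity, and it is attained by the sorted computational-basis representative.

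\emph{Reduction.} Since both $U_A\otimes U_B$ and the Clifford $C$ are maximized over, we have
\[
F_{\rm NL}(\ket{\Psi})=\max_{U_A,U_B}\ \max_{\ket{S}}|\bra{S}(U_A\otimes U_B)\ket{\Psi}|^2 ,
\]
where $\ket{S}$ ranges over stabilizer states. Exchanging the two maxima, we fix a stabilizer state $\ket{S}$ and maximize the overlap over local unitaries first. By the quantization of stabilizer entanglement, $\ket{S}$ is local-Clifford equivalent to $k$ Bell pairs across the cut, with $0\le k\le\nu$, tensored with product states. Hence its Schmidt coefficients are $2^{-k/2}$, repeated $2^k$ times.

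\emph{Upper bound.} We use the standard fact that for bipartite pure states with ordered Schmidt coefficients $\mu_x$ and $\sigma_x$,
\[
\max_{U_A,U_B}|\bra{\Phi}(U_A\otimes U_B)\ket{\Psi}| = \sum_x \mu_x\sigma_x .
\]
To justify it, write the overlap as $\Tr(M_\Phi^\dagger U_A M_\Psi U_B^T)$, where $M$ denotes the coefficient matrix. Von Neumann's trace inequality bounds this by $\sum_x\sigma_x(M_\Phi)\sigma_x(M_\Psi)$, and aligning the singular vectors attains the bound. For unequal local dimensions we zero-pad, since only the $d$ nonzero Schmidt values matter. Taking $\ket{\Phi}=\ket{S}$ with its flat spectrum gives, for each $k$, the value $2^{-k/2}\sum_{x<2^k}\mu_x$. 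Squaring and maximizing over $k$, together with the fact that every $k\in\{0,\dots,\nu\}$ occurs, yields exactly Eq.~\eqref{eq:main-formula}. So the identity follows directly once the exchange of maxima and the trace inequality are in place.

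\emph{Attainment by $\ket{\Psi_{\rm CB}}$.} For each $k$, let $\ket{S_k}=2^{-k/2}\sum_{x<2^k}\ket{x}_A\ket{x}_B$. This is $k$ Bell pairs on the low-order designated qubit pairs, because the labels $x<2^k$ are exactly the strings with the top $\nu-k$ bits equal to zero, and all other qubits are in $\ket{0}$. It is therefore a stabilizer state, and
\[
\braket{S_k|\Psi_{\rm CB}}=2^{-k/2}\sum_{x<2^k}\mu_x .
\]
Consequently the single state $\ket{\Psi_{\rm CB}}$ reaches the optimal value at every Schmidt rank $2^k$ simultaneously, and hence also the global maximum $F_{\rm stab}(\ket{\Psi_{\rm CB}})=F_{\rm NL}$. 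The key point is the dyadic compatibility: the largest $2^k$ coefficients are placed on an aligned subcube of $\mathbb F_2^\nu$.

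\emph{Main obstacle.} The only delicate step is rigor in the upper bound. One must check that the exchange of maxima is legitimate (the sets are compact and a max–max commutes trivially). One must also verify that the quantization claim is used in the form ``Schmidt spectrum flat with $2^k$ entries'', including for unequal bipartitions where $n_A\ne n_B$. For the latter I would invoke the known result that the reduced state of a stabilizer state on $A$ is proportional to a projector of rank $2^k$, with $k\le\nu$. With those two points in hand, the proof is complete.
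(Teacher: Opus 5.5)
Your proof is correct and takes essentially the same route as the paper. For the upper bound, both use the Schmidt-overlap lemma (von Neumann's trace inequality with SVD alignment), the exchange of the two maxima, and the flat dyadic spectra from the bipartite stabilizer normal form; for attainment, both use the witnesses $\ket{\Phi_k}$ on the low-order qubit pairs overlapping the sorted representative $\ket{\Psi_{\rm CB}}$. One step is implicit in your write-up and should be stated: $\ket{\Psi_{\rm CB}}$ lies in the local-unitary orbit of $\ket{\Psi}$, obtained by mapping the ordered, zero-padded Schmidt vectors to $\ket{x}_A,\ket{x}_B$ and extending arbitrarily on the larger party, and this is what gives $F_{\rm stab}(\ket{\Psi_{\rm CB}})\le F_{\rm NL}$ and hence equality.
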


The mechanism behind Theorem~\ref{thm:main} is the quantization of stabilizer entanglement.  Across any cut, a pure stabilizer state is equivalent under local Clifford operations to $k$ Bell pairs accompanied by product spectators~\cite{Fattal2004, PhysRevA.71.022315}, so its entanglement spectrum is flat, with $2^k$ equal Schmidt coefficients. 
The best stabilizer approximation of $\ket{\Psi}$ up to local unitaries is therefore the best approximation of its entanglement spectrum by a flat spectrum of dyadic rank $2^k$.
Since the number of shared Bell pairs lies between 0 and $\nu$, only $\nu+1$ Bell-pair sectors compete (Fig.~\ref{fig:concept}b), and within each sector the optimal overlap is fixed by the ordered Schmidt coefficients through von Neumann's trace inequality~\cite{vonNeumann1937,Mirsky1975}. Explicitly, the rank-$2^k$ branch
$F_k(\bm\mu):=2^{-k}\bigl(\sum_{x=0}^{2^k-1}\mu_x\bigr)^2$ is exactly the
largest fidelity between $\ket{\Psi}$ and any state that is $k$ Bell pairs up
to local unitaries.  The maximum in Eq.~\eqref{eq:main-formula} is the
upper envelope of these $\nu+1$ branches. The complete proof is given in Methods.

Remarkably, Theorem~\ref{thm:main} thus replaces two exponentially large optimizations in Eq.~\eqref{eq:nonlocal-fid-magic} by a choice among $\nu+1$ integers.  For twenty qubits, for instance, the stabilizer set contains about $10^{70}$ states~\cite{PhysRevA.70.052328} and each local unitary group has $4^{10}$ real parameters, yet the nonlocal magic across the balanced cut is the largest of eleven partial sums.  The practical consequence is that nonlocal magic becomes exactly as accessible as the entanglement spectrum itself.  Whenever the Schmidt coefficients of a state can be obtained,
$D_{\min}^{\rm NL}$ follows at the negligible cost of $\nu$ partial sums.  From a full state vector the spectrum is a single singular-value decomposition, which reaches beyond twenty qubits, whereas exact evaluation of the fixed-basis $D_{\min}$ stops at about nine qubits~\cite{Hamaguchi2024Handbook, Hamaguchi2025Faster}. For free-fermion states the spectrum follows from an $N\times N$ correlation matrix~\cite{PeschelEisler2009}, far beyond the reach of full state vector methods.  The formula is also forgiving. Only the leading $2^{k_\star}$ Schmidt coefficients enter the winning branch, and an $\ell_1$ error $\varepsilon$ in the spectrum, whether from truncation or from measurement noise, shifts $F_{\rm NL}$ by at most $2\sqrt{\varepsilon}$ (Supplementary Section~\ref{sm:properties}).

Tensor-network simulations are the natural setting for this result. A matrix-product state in canonical form carries the Schmidt coefficients across every bond explicitly~\cite{Schollwock2011}, so the nonlocal magic of a chain across every cut is read off from the bond spectra at no additional cost, in ground states obtained by the density-matrix renormalization group as well as in time-evolved states. Tree tensor networks provide the spectrum across their natural cuts in the same way.  Fixed-basis magic in the same simulations requires Pauli sampling or replica contractions~\cite{Haug2023Monotones, Tarabunga2023ManyBody}, while nonlocal magic requires nothing beyond what the simulation already stores. Because truncation errors enter only through the $\ell_1$ bound above, the bond dimension that resolves the entanglement entropy resolves the nonlocal magic as well.

Theorem~\ref{thm:main} settles also the experimental status of nonlocal magic. Measuring the SRE in a fixed basis requires Pauli expectation values or interference between copies~\cite{Haug2024Algorithms}, and isolating its nonlocal part adds an optimization over local bases that has to be run on the device.  
This is why nonlocal magic has so far been measured only for two  qubits~\cite{Ahmad2025Experimental}, where the entanglement spectrum is a single number and no optimization is needed.  For larger systems the optimization is exponentially hard, and Theorem~\ref{thm:main} removes it altogether, at any system size $N$ and any bipartition.
Because $D_{\min}^{\rm NL}$ is a functional of the entanglement spectrum alone, nonlocal magic is obtained directly by entanglement spectroscopy. Interference between copies yields the moments $\Tr\rho_A^n$ of the reduced state~\cite{Islam2015Measuring}, randomized measurements and classical shadows yield the same moments from single copies and local rotations~\cite{Brydges2019Randomized, Elben2023Toolbox}, and entanglement-Hamiltonian tomography reconstructs the entire spectrum by fitting a local entanglement Hamiltonian, as demonstrated on trapped-ion strings~\cite{Kokail2021EHTomography}.  These protocols resolve the largest eigenvalues of $\rho_A$ with the smallest statistical error, and those are precisely the ones that enter the winning branch. The $2\sqrt{\varepsilon}$ bound then converts that error directly into an error bar on nonlocal magic.
Theorem~\ref{thm:main} thus turns entanglement spectroscopy into a measurement of nonlocal magic and opens it to experiments at the many-body scale.

\subsection*{Bell-pair sectors and the geometry of nonlocal magic}

Theorem~\ref{thm:main} can be read as a statement about the geometry of entanglement spectra. Writing
$D_{1/2}(p\Vert q)=-2\log_2\sum_i\sqrt{p_iq_i}$ for the R\'enyi-$1/2$
divergence and $\bm u_r$ for the distribution that is uniform on its
first $r$ entries and zero elsewhere, Eq.~\eqref{eq:main-formula} reads
\begin{equation}
 D_{\min}^{\rm NL}(\Psi)
 =\min_{0\le k\le\nu}
 D_{1/2}\!\left(\bm\lambda\middle\Vert\bm u_{2^k}\right).
 \label{eq:renyi-geometry}
\end{equation}
Nonlocal magic is the distance of the entanglement spectrum from the finite set of flat spectra $\bm u_r$  compatible with stabilizer entanglement (Fig.~\ref{fig:concept}b). 
In particular, $D_{\min}^{\rm NL}=0$ if and only if $\bm\lambda=\bm u_{2^k}$ for some $k$, that is, if and only if the local unitary orbit of $\ket{\Psi}$ contains a stabilizer state.  
This flatness characterization of the zero set is
shared by every faithful nonlocal magic measure~\cite{Cao2025magical}.
The maximizing $k_\star$ identifies an optimal Bell-pair sector, and $F_{k_\star}$ is the best fidelity with which local unitaries can bring $\ket{\Psi}$ onto exactly $k_\star$ Bell pairs with product spectators.

At the two endpoint sectors the distance $D_{1/2}$ reduces to known entanglement measures.
The $k=0$ branch is $\lambda_0$, so $D_{1/2}(\bm\lambda\Vert\bm u_1)=S_\infty:=-\log_2\lambda_0$ is the min-entropy of entanglement, which for bipartite pure states coincides with the geometric entanglement~\cite{Shimony1995,WeiGoldbart2003}. 
The $k=\nu$ branch is the R\'enyi-$1/2$ entropy deficit $\nu-S_{1/2}$, with $S_{1/2}=2\log_2\sum_x\mu_x$.  Hence $D_{\min}^{\rm NL}\le\min\{S_\infty,\nu-S_{1/2}\}$, with equality whenever an endpoint sector is optimal: nonlocal magic never exceeds geometric entanglement.

An immediate consequence is that without entanglement there is no nonlocal magic.
A product state has $\bm\lambda=\bm u_1$ and $D_{\min}^{\rm NL}=0$, irrespective of the magic of its factors. 
For $N$ copies of the magic state $\ket{T}$ the ordinary magic measure is extensive, $D_{\min}(\ket{T}^{\otimes N}) \propto N$~\cite{Bravyi2019LowRank}, yet the NMRE of $\ket{T}^{\otimes N}$ vanishes across every cut, because local basis changes remove all magic confined to either party. 

The simplest entangled family shows how the Bell-pair sectors $k$ compete to fix $D_{\min}^{\rm NL}$. For $\ket{\psi_p}=\sqrt p\,\ket{00}+\sqrt{1-p}\,\ket{11}$ with $\tfrac12\le p\le1$, Theorem~\ref{thm:main} gives $F_{\rm NL}(\psi_p)=\max\{p,\tfrac12+\sqrt{p(1-p)}\}$: the first branch is the best product approximation ($k=0$), the second the best Bell-pair approximation ($k=1$).  The product branch wins for $p\ge p_\star=\tfrac12+\tfrac{\sqrt2}{4}$, where $D_{\min}^{\rm NL}=S_\infty$ exactly, and the Bell branch wins below; at the crossing the NMRE is continuous but has a kink.  
Analogous competition for rank-three spectra is mapped in Supplementary Fig.~\ref{fig:sector-simplex}.

The NMRE is nonnegative, continuous and invariant under local unitaries.  It is, however, \emph{not} an entanglement monotone.  Local operations convert a Bell pair deterministically into $\ket{\psi_p}$ for any $\tfrac12<p<1$~\cite{Nielsen1999Majorization} lowering the entanglement entropy while increasing the NMRE from zero to a positive value.  Nor is it additive: a flat rank-three spectrum has $F_{\rm NL}=3/4$, whereas two copies, with flat rank nine, have $F_{\rm NL}=8/9$.  These properties identify $D_{\min}^{\rm NL}$ as a geometric measure of nonlocal magic.

Two upper bounds follow from Theorem~\ref{thm:main} (Methods).  Grouping the sorted spectrum into the dyadic shells $D_0=\{0\}$ and $D_j=\{2^{j-1},\ldots,2^j-1\}$, the largest shell weight is at least $(\nu+1)^{-1}$ and never exceeds $F_{\rm NL}$, which gives the universal bound $0\le D_{\min}^{\rm NL}\le\log_2(\nu+1)$.  A complementary entropy-dependent upper bound is
\begin{equation}
 D_{\min}^{\rm NL}
 \le\log_2\!\left(2S_1+3\right),
 \qquad
 S_1=-\sum_x\lambda_x\log_2\lambda_x .
 \label{eq:entropy-ceiling-main}
\end{equation}
For many-body physics this bound has immediate implications. Entanglement scaling, the property that classifies phases of quantum matter, is converted by Eq.~\eqref{eq:entropy-ceiling-main} directly into a limit on nonlocal magic. In a fixed basis, the magic of a many-body state is generically extensive~\cite{LiuWinter2022, Haug2023Monotones, Turkeshi2025Pauli}. In contrast, nonlocal magic is bounded by the logarithm of the entanglement entropy: in terms of the smaller subsystem size $\nu$, $D_{\min}^{\rm NL}$ is at most $O(1)$ under the area law of gapped phases~\cite{RevModPhys.82.277}, at most $O(\log\log\nu)$ at a one-dimensional critical point, and at most $O(\log\nu)$ under a volume law. Thus, for every state, an optimal choice of local unitaries $U_A, U_B$ removes all but a logarithmic amount of magic, as measured by the min-relative entropy of nonstabilizerness.  The universal bound $\log_2(\nu+1)$ also shows that the regularized value of every fixed finite state vanishes, $\lim_{t\to\infty}t^{-1}D_{\min}^{\rm NL}(\Psi^{\otimes t})=0$: unlike entanglement, the NMRE has no extensive asymptotic rate.

\subsection*{Nonlocal magic and entanglement embezzlement}
\label{subsec:embezz}

The NMRE has an operational counterpart in entanglement theory.  Under  LOCC, entanglement can be spent but never created~\cite{Horodecki2009Entanglement}, and for pure states majorization of Schmidt vectors decides exactly which conversions are possible~\cite{Nielsen1999Majorization}.  Entanglement embezzlement~\cite{vanDam2003embezzling} shows how far this accounting can be stretched.  Two parties share a catalyst $\ket{\chi}$ together with unentangled ancillas $\ket{0}_A\ket{0}_B$ and apply local unitaries such that
  \begin{equation}
   (V_A\otimes V_B)\,\ket{\chi}\ket{0}_A\ket{0}_B
   \;\approx\;\ket{\chi}\ket{\sigma}.
   \label{eq:embezzlement-protocol}
  \end{equation}
The ancillas end up in the entangled target $\ket{\sigma}$ while the  catalyst is almost unchanged: the entanglement has been borrowed from $\ket{\chi}$, with the deficit hidden in an arbitrarily small distortion of the catalyst.  Embezzlement is an extreme form of catalysis~\cite{JonathanPlenio1999}, with applications to channel simulation, nonlocal games and quantum field theory~\cite{BertaChristandlRenner2011,LeungWang2014embezzling, vanLuijk2025embezzlers}.  
Equation~\eqref{eq:embezzlement-protocol} depicts the traditional communication-free local-unitary protocol. Here a sequence of catalysts $\{\ket{\chi_n}\}$ universally embezzles under LOCC if, for every fixed finite-dimensional target $\ket{\sigma}$, an LOCC channel maps the input to an output whose trace distance from $\ket{\chi_n}\ket{\sigma}$ vanishes as $n\to\infty$. 
This LOCC notion is strictly more permissive than universality under local unitaries~\cite{Luijk2025multipartite} (Supplementary Section~\ref{sm:embezzlement}).

Which spectra allow this?  Absorbing one Bell pair splits every Schmidt coefficient into two equal halves, $\lambda_x\mapsto\lambda_{\lfloor x/2\rfloor}/2$, which shifts the spectrum by one unit along the logarithmic rank axis.
For the catalyst to remain almost unchanged, this shift must leave the spectrum almost invariant, which is possible only if no single logarithmic scale carries appreciable weight.
For ordered Schmidt probabilities, padded with zeros beyond the Schmidt rank, let $\eta_{\rm emb}(\bm\lambda):=\max_{\ell\ge0}\sum_{x=\ell}^{2\ell}\lambda_x$ be the largest weight in any octave of rank $[\ell,2\ell]$.  
This heuristic is exact, as proved in Ref.~\cite{ZanoniTheurerGour2024}: $\eta_{\rm emb}(\chi_n)\to0$ is necessary and sufficient for universal LOCC embezzlement.  The following theorem ties this criterion to the NMRE.

\begin{theorem}[Embezzlement criterion]
  \label{thm:octave}
  (i) For every finite-dimensional pure state, after zero padding into
  qubit registers when necessary,
  \begin{equation}
   -\log_2\eta_{\rm emb}-\log_2(3+2\sqrt2)
   \le D_{\min}^{\rm NL}
   \le-\log_2\eta_{\rm emb},
   \label{eq:octave-equivalence-main}
  \end{equation}
  and the additive constant is asymptotically sharp.
  
  (ii) Consequently, by the criterion of
  Ref.~\cite{ZanoniTheurerGour2024}, a sequence of pure states
  universally embezzles entanglement under LOCC if and only if its
  nonlocal magic diverges:
  \begin{equation}
   \{\chi_n\}\text{ is a universal embezzler}
   \iff D_{\min}^{\rm NL}(\chi_n)\to\infty .
   \label{eq:embezzlement-iff-main}
  \end{equation}
\end{theorem}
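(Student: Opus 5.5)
The plan is to work entirely with the closed form of Theorem~\ref{thm:main}, $F_{\rm NL}=\max_{0\le k\le\nu}F_k(\bm\mu)$ with $F_k=2^{-k}\bigl(\sum_{x<2^k}\mu_x\bigr)^2$. Part (i) then becomes two inequalities between this envelope and the octave weight $\eta:=\eta_{\rm emb}(\bm\lambda)$. Both directions compare sorted sums over dyadic windows with sums over octaves. Part (ii) is then immediate: Eq.~\eqref{eq:octave-equivalence-main} gives $D_{\min}^{\rm NL}(\chi_n)\to\infty$ iff $\eta_{\rm emb}(\chi_n)\to0$, and the latter is the criterion of Ref.~\cite{ZanoniTheurerGour2024}.

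\emph{Upper bound, $F_{\rm NL}\ge\eta$.} Fix an octave $[\ell,2\ell]$ attaining $\eta$.
\begin{itemize}
\item If $\ell=0$, the octave is $\{0\}$ and $F_0=\lambda_0=\eta$.
\item If $\ell\ge1$, let $r=2^k$ be the smallest power of two with $r\ge 2\ell+1$, so that $r\le4\ell$. If $2\ell+1>d$, take $r=d$ instead; then $d\le 2\ell\le 4\ell$, and the zero padding means the octave lies inside $[0,d)$ anyway.
\end{itemize}
Set $S=\sum_{x=\ell}^{2\ell}\mu_x$. Because the coefficients are sorted, each $\mu_x$ with $x<\ell$ is at least $\mu_\ell$. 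Hence
\[
\sum_{x<r}\mu_x\ \ge\ \ell\mu_\ell+S .
\]
By AM--GM, $(\ell\mu_\ell+S)^2\ge4\ell\mu_\ell S$, and therefore
\[
F_k\ \ge\ \frac{4\ell}{r}\,\mu_\ell S\ \ge\ \mu_\ell S\ \ge\ \sum_{x=\ell}^{2\ell}\mu_x^2=\eta .
\]
The last step uses $\mu_x\le\mu_\ell$ on the octave.

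\emph{Lower bound, $F_{\rm NL}\le(3+2\sqrt2)\,\eta$.} Fix $k$ and split $\{0,\dots,2^k-1\}$ into the dyadic shells $D_0=\{0\}$ and $D_j=\{2^{j-1},\dots,2^j-1\}$ for $1\le j\le k$. Each shell lies inside an octave, so its weight is at most $\eta$. Cauchy--Schwarz on each shell gives $\sum_{x\in D_j}\mu_x\le\sqrt{|D_j|\,\eta}$, and summing over shells yields
\[
\sum_{x<2^k}\mu_x\ \le\ \sqrt\eta\Bigl(1+\tfrac{2^{k/2}-1}{\sqrt2-1}\Bigr)=\sqrt\eta\,\bigl((\sqrt2+1)2^{k/2}-\sqrt2\bigr).
\]
Squaring and dividing by $2^k$ gives $F_k\le(\sqrt2+1)^2\eta=(3+2\sqrt2)\,\eta$ uniformly in $k$. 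Taking logarithms gives the left inequality in Eq.~\eqref{eq:octave-equivalence-main}.

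\emph{Sharpness, the expected main obstacle.} Equality in the chain above requires three things at once:
\begin{itemize}
\item every shell carries the same weight $w$;
\item the spectrum is flat within each shell, $\lambda_x=w/2^{j-1}$ on $D_j$;
\item $k$ is large, so that the $-\sqrt2$ correction becomes negligible.
\end{itemize}
I would therefore take $\nu$ shells of equal weight $w\approx1/(\nu+1)$, and also check that sorting is preserved, which holds since the per-entry density decreases with $j$. The delicate point is that an octave straddling two shells could carry more than $w$. For $\ell\in D_j$, a direct computation gives octave weight $(2^j-\ell)w/2^{j-1}+(2\ell-2^j+1)w/2^j=w(1+2^{-j})$. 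For small $j$ this is still up to $2w$, so the small shells must be handled separately. I would first try shrinking the weights of the first few shells, or letting the per-shell weights taper slowly, so that $\eta=w(1+o(1))$. Meanwhile $F_\nu/w\to(\sqrt2+1)^2$ as $\nu\to\infty$, which gives $D_{\min}^{\rm NL}+\log_2\eta\to-\log_2(3+2\sqrt2)$, i.e.\ asymptotic sharpness of the additive constant.
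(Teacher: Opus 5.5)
Your two inequalities and part (ii) are correct and follow the paper's proof. You use AM--GM of $\ell\mu_\ell$ against the octave sum inside the smallest dyadic rank $r\ge2\ell+1$ (so $r\le4\ell$), and Cauchy--Schwarz on the dyadic shells summed as a geometric series. Handling octaves that spill past $d$ through the rank-$d$ branch (valid because then $d\le2\ell$) is a slightly shorter substitute for the paper's argument that $\eta_\ell\le\eta_{\ell-1}$ for $\ell\ge d/2$.

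The gap is the sharpness claim, which you leave as a plan. Your diagnosis is right. Equal-weight flat shells give octave weight $w(1+2^{-j})$, which is largest at $j=1$, where it equals $\tfrac32w$ (not $2w$). That family is sharp for the shell bound $F_{\rm NL}\le(3+2\sqrt2)P_{\rm sh}$, but it only gives $F_{\rm NL}/\eta_{\rm emb}\to\tfrac23(3+2\sqrt2)$.

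Shrinking a fixed number of initial shells does not cure this; the number must grow. For instance, take $2\le J<\nu$ and set $w_j=w/2$ for $j<J$ and $w_j=w$ for $J\le j\le\nu$.
\begin{itemize}
\item The per-entry weights are then nonincreasing and equal on $D_{J-1}$ and $D_J$, so an octave starting in $D_{J-1}$ weighs at most $w$.
\item Octaves starting in earlier shells weigh at most $\tfrac34w$, and later ones at most $w(1+2^{-J})$.
\item Dropping the first $J$ shells from the amplitude sum gives $F_\nu\ge(\sqrt2+1)^2w\,(1-2^{-(\nu-J+1)/2})^2$.
\end{itemize}
Letting $J\to\infty$ with $\nu-J\to\infty$ then yields $F_{\rm NL}/\eta_{\rm emb}\to3+2\sqrt2$.

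The paper avoids the boundary loss entirely by shifting the blocks by one: $a_x=2^{-m}$ on $\{2^m-1,\ldots,2^{m+1}-2\}$ for $m<\nu$, and $a_{d-1}=0$. Each block is itself the octave with $\ell=2^m-1$. An octave starting at $\ell$ in block $m$ carries $(2^{m+1}-1-\ell)2^{-m}+(2\ell-2^{m+1}+2)2^{-m-1}=1$, or less if truncated at $d$. Hence $P_{\rm sh}=\eta_{\rm emb}=1/\nu$ exactly, and the rank-$d$ branch gives the ratio $(2^{\nu/2}-1)^2/[d(\sqrt2-1)^2]\to3+2\sqrt2$.
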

Both bounds in part (i) come from the same picture.  The octave $[\ell,2\ell]$ that  carries the largest weight $\eta_{\rm emb}$ always fits inside a single Bell-pair sector, so that sector has fidelity at least $\eta_{\rm emb}$ with the state.  Conversely, the sector of $k$ Bell pairs is the union of the dyadic shells $D_0,\ldots,D_k$ introduced above, each of which is itself an octave and therefore weighs at most $\eta_{\rm emb}$.  Adding the shell weights with their geometric prefactors gives the constant $3+2\sqrt2$ (Methods).  Part (ii) then follows at once: $\eta_{\rm emb}\to0$ and $D_{\min}^{\rm NL}\to\infty$ are the same statement.

Theorem~\ref{thm:octave} identifies two concepts from different fields. On one side stands a resource related to universal quantum computation, the magic that survives every local change of basis and measures how far the correlations of a state lie beyond the reach of Clifford circuits.
On the other stands a capability of entanglement theory, the power of a state to act as a catalyst from which entanglement can be borrowed at will. 
The theorem says that both sides are the same property of the entanglement spectrum: a family embezzles universally exactly when its nonlocal magic diverges.

The criterion also settles a question left open above: how does nonlocal magic grow with the number of copies?  Let $\ket{\psi_{\bm p}}=\sum_i\sqrt{p_i}\ket{i}_A\ket{i}_B$ have entanglement spectrum $\bm p$, and consider the bipartite state
\begin{equation}
   \ket{\psi_{\bm p}}^{\otimes m}
   =\sum_{i_1,\ldots,i_m}\sqrt{p_{i_1}\cdots p_{i_m}}\,
   \ket{i_1\cdots i_m}_{A}\ket{i_1\cdots i_m}_{B},
   \label{eq:tensor-power-state}
\end{equation}
with the local unitaries $U_A, U_B$ acting jointly on all $A$ factors and on all $B$ factors.  The universal bound $D_{\min}^{\rm NL}\le\log_2(\nu+1)$, applied to the $m\nu$ qubits on each side, only says that $D_{\min}^{\rm NL}(\psi_{\bm p}^{\otimes m})\le\log_2(m\nu+1)$, logarithmic in $m$.  The exact behaviour, derived in Supplementary Section~\ref{sm:properties} (Theorem~S1), is a dichotomy.  If the Schmidt coefficients of $\ket{\psi_{\bm p}}$ are not all equal, $D_{\min}^{\rm NL}( \ket{\psi_{\bm p}}^{\otimes m})=\tfrac12\log_2m+O(1)$; if they are all equal, so that $\ket{\psi_{\bm p}}$ is a maximally entangled state, the nonlocal magic of any number of copies never exceeds $\tfrac12$. 
Therefore, by Theorem~\ref{thm:octave}(ii), enough copies of any partially entangled two-qubit state $\sqrt p\,\ket{00}+\sqrt{1-p}\,\ket{11}$ with $p\ne\tfrac12$ form a universal embezzler, whereas copies of a Bell pair never do. 
Combining enough copies of any bipartite state with unequal Schmidt coefficients, as in Eq.~\eqref{eq:tensor-power-state}, thus opens a generic route towards universal embezzlement beyond the canonical van Dam--Hayden catalysts~\cite{vanDam2003embezzling}.

\subsection*{Four classes of entanglement spectra}

To see what Theorems~\ref{thm:main} and~\ref{thm:octave} imply for  many-body states, we consider four representative families of entanglement spectra, which turn out to differ sharply in nonlocal magic and catalytic behaviour (Fig.~\ref{fig:classes}). 
We begin with random states, whose entanglement spectrum follows the Marchenko--Pastur law~\cite{marvcenko1967distribution, ZyczkowskiSommers2001, Nadal2011}.
\begin{figure}[t]
 \centering
 \begin{tikzpicture}
  \node[anchor=south west,inner sep=0] (img)
    {\includegraphics[width=\columnwidth]{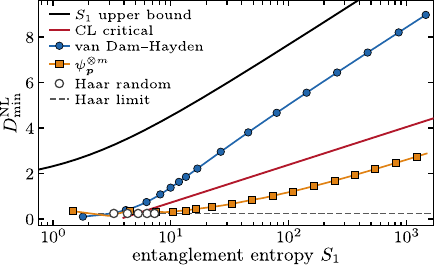}};
  \begin{scope}[x={(img.south east)},y={(img.north west)}]
   \node[anchor=west,inner sep=0] at (0.448,0.938)
     {\fontsize{7.8}{7.8}\selectfont\eqref{eq:entropy-ceiling-main}};
\node[anchor=west,inner sep=0] at (0.358,0.882)
     {\fontsize{7.8}{7.8}\selectfont\eqref{eq:CFT-main}};
   \node[anchor=west,inner sep=0] at (0.35,0.622)
     {\fontsize{7.8}{7.8}\selectfont\eqref{eq:Haar-limit-main}};
  \end{scope}
 \end{tikzpicture}
\caption{\textbf{Nonlocal magic against entanglement entropy for four
 classes of states.}  The nonlocal min-relative entropy of
 nonstabilizerness $D_{\min}^{\rm NL}$ versus the entanglement
 entropy $S_1$ on a logarithmic scale.  Black line: the entropy upper bound, which no state can exceed.  
 Red line: the Calabrese--Lefevre asymptote of critical spectra.  Blue circles: exact values for the van Dam--Hayden catalysts $\ket{\Gamma_d}$ with $d=2^2,\ldots,2^{2900}$.  
 Orange squares: exact values for the tensor powers $\psi_{\bm p}^{\otimes m}$, Eq.~\eqref{eq:tensor-power-state}, with $\bm p=(\tfrac12,\tfrac3{10},\tfrac15)$ and $m\le10^3$.  Open
 circles: random states for $N=8,\ldots,16$ qubits 
 converge to the $O(1)$ limit \eqref{eq:Haar-limit-main} for balanced cut $\Delta =0$ (dashed).  
 The families with divergent $D_{\min}^{\rm NL}$ are exactly the universal LOCC embezzlers, Eq.~\eqref{eq:embezzlement-iff-main}.}
 \label{fig:classes}
\end{figure}

\begin{proposition}[Random states]
  \label{prop:Haar}
  Let $\ket{\Psi}$ be Haar random on $\mathbb C^d\otimes\mathbb C^D$ with $d=2^\nu\le D$, and take $d,D\to\infty$ at fixed qubit imbalance
  $\Delta=|n_A-n_B|$.  Then $D_{\min}^{\rm NL}$ converges in probability to a finite limit $D_\Delta$, given in closed form in Methods, Eq.~\eqref{eq:Haar-limit-main}, $D_0<\tfrac14$ for balanced cuts, and  $D_\Delta=2^{-\Delta}/(4\ln2),[1+O(2^{-\Delta})]$ for unbalanced ones.
\end{proposition}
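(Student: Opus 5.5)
The plan is to feed the Marchenko--Pastur law for the entanglement spectrum into the exact formula of Theorem~\ref{thm:main}. Write $\mu_x=\sqrt{\lambda_x}$ and set $c:=d/D=2^{-\Delta}$. For a Haar-random state the rescaled eigenvalues $d\lambda_x$ of $\rho_A$ have an empirical distribution that converges almost surely to the Marchenko--Pastur law with ratio $c$ and unit mean. Its support is $[(1-\sqrt c)^2,(1+\sqrt c)^2]$, and the largest eigenvalue converges to the upper edge. Because the Schmidt coefficients are sorted, the partial sum $\sum_{x<td}\mu_x$ for $t\in(0,1]$ is a quantile integral: $d^{-1/2}\sum_{x<td}\mu_x\to\sqrt d\int_0^{t}\sqrt{q(s)}\,ds$, where $q(s)$ is the MP quantile function taken in decreasing order. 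The branch $F_k$ with $2^k=td$ then converges to
\[
G_c(t):=\frac{1}{t}\Bigl(\int_0^t\sqrt{q(s)}\,ds\Bigr)^2 .
\]
The dyadic ranks near the top are $2^{\nu-j}=2^{-j}d$, so $t$ ranges only over $\{2^{-j}:j\ge0\}$. This gives the candidate limit
\[
D_\Delta=-\log_2\max_{j\ge0}G_c(2^{-j}).
\]

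First I will justify uniform convergence of the finitely many relevant branches and show that small ranks are irrelevant. For $2^k\le\varepsilon d$, Cauchy--Schwarz gives $F_k\le\sum_{x<2^k}\lambda_x\le 2^k\lambda_0\le\varepsilon(1+\sqrt c)^2+o(1)$. So only $j\le J(\varepsilon)$ matters, and $G_c(2^{-j})$ stays near $1$ for small $j$. On that finite set, convergence in probability follows from weak convergence of the empirical measure of $\sqrt{d\lambda}$, which is bounded because the top eigenvalue converges; the continuous-quantile argument handles the partial sums. This yields $D_{\min}^{\rm NL}\to D_\Delta$ in probability. The closed form comes from evaluating the MP integrals $\int\sqrt{x}\,\rho_{\rm MP}(x)\,dx$, restricted to upper quantiles, in terms of elementary or elliptic functions.

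Second, I will establish the two quantitative statements. For the balanced cut, $c=1$, the $j=0$ branch is $G_1(1)=\bigl(\int\sqrt x\,\rho_{\rm MP}\bigr)^2=(8/(3\pi))^2\approx0.7205$, which gives about $0.473$ and is too weak on its own. So the bound $D_0<\tfrac14$, equivalently $\max_j G_1(2^{-j})>2^{-1/4}\approx0.8409$, must come from $j=1$ or $j=2$. I will evaluate $G_1(1/2)$ numerically with rigorous bounds to certify that it exceeds $0.8409$; if it does not, I will move to $j=2$. For $c\to0$ the spectrum concentrates near $1$. Writing $x=1+\sqrt c\,y$ with $y$ semicircular of variance $1$, the expansion $\sqrt x=1+\tfrac{\sqrt c}{2}y-\tfrac{c}{8}y^2+\dots$ gives $G_c(1)=1-\tfrac{c}{4}+O(c^{3/2})$ at $j=0$. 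This yields $-\log_2G_c(1)=\tfrac{c}{4\ln2}(1+\ldots)$. I still need to show that $j\ge1$ branches lose: by Jensen, $G_c(1/2)\approx1-c(\ldots)$ with a larger coefficient, since the conditional variance over the top half is not smaller in the relevant sense. I must also check that the correction is $O(c)$ relative, matching the $[1+O(2^{-\Delta})]$ form. Odd moments of the semicircle vanish, so the next correction is order $c^2$; I will verify this through the exact MP moments $\mathbb E\sqrt x$.

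The main obstacle is the comparison among the branches. This includes certifying $D_0<\tfrac14$ through an explicit quantile integral at $t=1/2$, and proving that $j=0$ is optimal for small $c$ with a uniform error term. I would handle both by explicit expansion and by bounding quantile integrals of the semicircle and MP densities.
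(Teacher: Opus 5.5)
Your skeleton is the paper's: Marchenko--Pastur limits of the branches $G_c(2^{-j})$ (the paper's $\Phi_c(2^{-j})$), and the cutoff $F_k\le 2^k\lambda_0$ for small ranks. One normalization slip: it should read $d^{-1/2}\sum_{x<td}\mu_x\to\int_0^t\sqrt{q(s)}\,ds$. The gap is in the branch comparison, which carries the content of the proposition, and your plan for it rests on a false heuristic. Since the limiting law has mean one, $G_c(t)=t\,\bigl(\mathbb E[\sqrt X\mid X\text{ in the top fraction }t]\bigr)^2$, which is $\approx t$ for a nearly flat spectrum. So $G_c(1/2)=\tfrac12+O(\sqrt c)$, not $1-c(\cdots)$, and the Jensen expansion ``with a larger coefficient'' you propose cannot be carried out. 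The same misconception is behind ``$G_c(2^{-j})$ stays near $1$ for small $j$''.

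The missing tool is Cauchy--Schwarz together with $\operatorname{Var}X=c$. With $E_\alpha$ the event that $X$ lies in the top fraction $\alpha$, it gives $G_c(\alpha)\le\mathbb E[X\mathbf 1_{E_\alpha}]\le\alpha+\sqrt{c\alpha(1-\alpha)}$. This disposes of every $j\ge1$ at once for small $c$. For $\alpha\le\tfrac14$ it caps the branch at $(1+\sqrt3)/4\approx0.683$, while $G_c(1)\ge G_1(1)=64/(9\pi^2)\approx0.72$ by monotonicity in $c$. Hence for every $c$ only the full-rank and half-rank branches can win. The same cap makes your fallback to $j=2$ at $c=1$ vacuous. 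Fortunately $j=1$ suffices, and the substitution $x=4\cos^2\varphi$ evaluates it exactly: $G_1(1/2)=\tfrac{128}{9\pi^2}\sin^6\varphi_\star\approx0.8444>2^{-1/4}$, where $2\varphi_\star-\sin2\varphi_\star=\pi/2$.

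Second, the proposition asserts the specific closed form of Eq.~\eqref{eq:Haar-limit-main}. That requires identifying the winning branch at every fixed $\Delta$, not only as $c\to0$. Your plan yields only $D_\Delta=-\log_2\max_jG_c(2^{-j})$ plus a small-$c$ expansion. That expansion says nothing at $\Delta=1,2$, where the Taylor series of $\sqrt x$ about $1$ does not even converge on the whole support $[(1-\sqrt c)^2,(1+\sqrt c)^2]$.

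The paper closes this non-perturbatively. Euler's representation gives $\sqrt{G_c(1)}=\mathbb E\sqrt{1-cT}$ with $T\sim\mathrm{Beta}(\tfrac12,\tfrac32)$ and $\mathbb ET=\tfrac14$. The chord bound for the concave square root then gives $G_c(1)\ge[(3+\sqrt{1-c})/4]^2$, which beats $G_c(1/2)\le(1+\sqrt c)/2$ for all $c\le\tfrac12$. The margin left at $c=\tfrac12$ is only $(1-1/\sqrt2)^2/16$. At $c=1$, the explicit values above show the half-rank branch wins.

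Finally, your semicircle shortcut is not exact, since the Marchenko--Pastur third central moment is $c^2$. The relative $O(c)$ correction nevertheless follows directly from ${}_2F_1(-\tfrac12,\tfrac12;2;c)=1-\tfrac c8-\tfrac{c^2}{64}+O(c^3)$, i.e.\ $-\ln G_c(1)=\tfrac c4+O(c^2)$.
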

Random states, the paradigm of volume-law entanglement~\cite{Page1993average}, thus carry only a fraction of a bit of nonlocal magic, as anticipated in Ref.~\cite{Cao2025magical}, and exponentially less across unbalanced ($\Delta \neq 0$) cuts.  The reason is that a random state is nearly maximally entangled across an unbalanced cut: its Marchenko--Pastur spectrum is almost uniform at the largest allowed rank, which is itself a Bell-pair sector, and $D_\Delta$ follows in closed form from the full-rank branch of Eq.~\eqref{eq:main-formula}.
Across a balanced cut ($\Delta=0$) the spectrum is far from uniform  and the sector of half the maximal rank takes over, which yields the closed form of $D_0$. Both expressions are given in Methods.  
In either  case the Schmidt weight is concentrated within a few octaves $[\ell,2\ell]$ of the maximal rank, so $\eta_{\rm emb}$ stays bounded away from zero: Haar-random families are not universal embezzlers, despite their volume-law entanglement.

Clean critical spectra provide the converse case.  Their entanglement
entropies and spectra have universal conformal-field-theory (CFT)
structure~\cite{CalabreseCardy2004,Alba2018Cardy}, captured by the
smoothed Calabrese--Lefevre (CL) spectrum~\cite{Calabrese2008spectrum},
in which $b:=-\ln\lambda_{\max}$ sets the entanglement entropy
$S_1=2b/\ln2+O(1)$ and, for a bulk interval,
$b=(c_{\rm CFT}/6)\ln L_{\rm eff}+O(1)$ in terms of the chord length
$L_{\rm eff}$.  A saddle-point evaluation of the branches of
Eq.~\eqref{eq:main-formula} (Methods and Supplementary
Section~\ref{sm:CL}) gives the following.

\begin{proposition}[Critical spectra]
  \label{prop:critical}
  For a smoothed Calabrese--Lefevre spectrum with $b\to\infty$,
  \begin{equation}
   \begin{aligned}
   D_{\min}^{\rm NL}
   &=\frac12\log_2S_1+\frac12\log_2\frac{\pi\ln2}{8}+o(1)\\
   &=\frac12\log_2\ln L_{\rm eff}
     +\frac12\log_2\frac{\pi c_{\rm CFT}}{24}+o(1).
   \end{aligned}
   \label{eq:CFT-main}
  \end{equation}
  \end{proposition}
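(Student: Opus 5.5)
The plan is to evaluate the branches $F_k(\bm\mu)=2^{-k}\bigl(\sum_{x<2^k}\mu_x\bigr)^2$ of Theorem~\ref{thm:main} in the continuum limit, using the smoothed Calabrese--Lefevre spectrum, and then maximize over $k$ by a saddle-point argument. I would use the CL counting function: the number of eigenvalues $\ge\lambda$ is $n(\lambda)=I_0\bigl(2\sqrt{b\ln(\lambda_{\max}/\lambda)}\bigr)$, with $\lambda_{\max}=e^{-b}$.

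The natural variable is $u:=2\sqrt{b\,t}$ with $t=\ln(\lambda_{\max}/\lambda)$. In this variable,
\begin{equation}
 \lambda=e^{-b-u^2/(4b)},\qquad x(u)=I_0(u)\simeq \frac{e^{u}}{\sqrt{2\pi u}} .
\end{equation}
The rank cutoff $x=2^k$ then corresponds to $u_k=k\ln 2+O(\ln u_k)$, and I would carry the logarithmic correction along. Sums over ranks become integrals $\int(\cdot)\,x'(u)\,du$, with $x'(u)\simeq e^{u}g(u)$ and $g$ slowly varying.

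The steps, in order, are as follows.

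\textbf{Normalization.} The total weight is $\int\lambda\,x'\,du\propto\int e^{-b+u-u^2/(4b)}g(u)\,du$. This is a Gaussian in $u$ centred at $u=2b$ with variance $2b$, which reproduces $S_1=2b/\ln2+O(1)$. It also shows that the weight density per unit $u$ is $p(u)\simeq(4\pi b)^{-1/2}e^{-(u-2b)^2/(4b)}$.

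\textbf{Numerator of $F_k$.} The partial sum is $\sum_{x<2^k}\mu_x\simeq\int^{u_k}e^{-b/2}e^{u-u^2/(8b)}g(u)\,du$. Its exponent is increasing up to $u=4b$, so for $u_k$ near $2b$ the integral is dominated by its upper endpoint. Laplace's method at the endpoint, with local slope $a=1-u_k/(4b)$, gives $e^{-b/2+u_k-u_k^2/(8b)}g(u_k)/a$.

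\textbf{Assembling $F_k$.} Dividing the squared numerator by $2^k=x(u_k)$ and by the normalization, I obtain
\begin{equation}
 F_k\simeq \frac{p(u_k)}{a^2}\,\bigl[1+o(1)\bigr].
\end{equation}
The Bessel prefactors $g$ and $\sqrt{2\pi u}$ cancel to leading order because everything is evaluated at $u\approx u_k$.

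\textbf{Maximizing over $k$.} The function $p(u)/(1-u/4b)^2$ is maximized at $u=2b+O(1)$, where it equals $4/\sqrt{4\pi b}=2/\sqrt{\pi b}$. Consecutive admissible values $u_k$ are spaced by $\ln2$, which is much smaller than the Gaussian width $\sqrt{2b}$, so restricting to integer $k$ costs only $o(1)$ relative error. The range constraint $k\le\nu$ is harmless because $\nu$ exceeds the spectral support.

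Hence $F_{\rm NL}=2/\sqrt{\pi b}\,(1+o(1))$. Substituting $b=S_1\ln2/2$ gives $D_{\min}^{\rm NL}=\tfrac12\log_2S_1+\tfrac12\log_2\frac{\pi\ln2}{8}+o(1)$. Then $b=(c_{\rm CFT}/6)\ln L_{\rm eff}+O(1)$ yields the second line of Eq.~\eqref{eq:CFT-main}.

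The main obstacle is controlling the error terms uniformly so that only an $o(1)$ correction survives. This has three parts. First, the replacement of discrete sums by integrals near the top of the spectrum: the largest eigenvalues are discrete, but they contribute negligibly to both sums at $u\sim 2b$. Second, the endpoint Laplace expansion, whose relative corrections are $O(1/(ab))$ from curvature and $O(1/u)$ from $g$. Third, the $O(\ln u)$ shift between $k\ln2$ and $u_k$. The shift only relabels which $k$ is optimal and does not change the maximum value, so I would handle it by treating $u_k$ itself as the optimization variable on a grid of mesh $\ln2+o(1)$.
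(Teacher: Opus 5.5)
Your route is the paper's own. Your variable $u=2\sqrt{bt}$ equals the paper's $2bw$, and under this substitution your envelope $p(u_k)/a^2$, with $a=1-u_k/(4b)$, is exactly Eq.~\eqref{eq:CL-envelope-methods}, $\frac{2}{\sqrt{\pi b}}\,w^{-1/2}(2-w)^{-2}e^{-b(w-1)^2}$. The saddle value $2/\sqrt{\pi b}$, the dyadic-rounding argument and the final constants also agree with the paper.

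The gap is in the step where you maximize over $k$. The function $p(u)/(1-u/4b)^2$ is \emph{not} maximized at $u=2b+O(1)$ on $(0,4b)$: it diverges as $u\to4b$, which is the $(2-w)^{-2}$ pole. For $u_k\ge4b$ the endpoint-Laplace picture does not apply at all, because the amplitude integrand $e^{-b/2+u-u^2/(8b)}$ then peaks in the interior at $u=4b$. Your own error term shows this once it is computed correctly. The curvature correction of the endpoint expansion is $1/(4a^2b)$, not $O(1/(ab))$. Hence $F_k\simeq p(u_k)/a^2$ holds only where $a\sqrt b\to\infty$, and in practice only on compact subsets of $0<u<4b$. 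The saddle at $u\approx2b$ is therefore only a local maximum of an approximation. To identify it with $F_{\rm NL}$ you must bound the true branches outside such a window. The paper does this in Eq.~\eqref{sm:eq:CL-tails}, and none of the three error sources you list covers it.

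The repair is short and stays within your framework. For large ranks, bound the numerator by the total amplitude $\sum_x\mu_x=\Tr\rho_A^{1/2}=e^{3b/2}$. Then $F_k\le e^{3b}/I_0(u_k)\lesssim\sqrt{u_k}\,e^{3b-u_k}$, which is $O(\sqrt b\,e^{-b/2})$ for $u_k\ge7b/2$. For small ranks, $\mu_x\le\sqrt{\lambda_{\max}}$ gives $F_k\le2^k\lambda_{\max}\le e^{u_k-b}\le e^{-b/2}$ for $u_k\le b/2$. Both bounds are exponentially below $2/\sqrt{\pi b}$, and on $[b/2,7b/2]$ your uniform asymptotics apply.

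A smaller point: the claim that $\nu$ exceeds the spectral support is not accurate, since the smoothed CL spectrum has unbounded rank. What you actually need, as the paper phrases it, is that the optimal rank $r_\star\sim e^{2b}/\sqrt b$, which is polynomial in $L_{\rm eff}$, lies below $2^\nu$.
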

The dominant ranks, of order $e^{2b}/\sqrt b$, grow only polynomially with the block size, far below the Schmidt rank $2^\ell$ of a block of $\ell$ spins, so the proposition applies directly to critical chains. 
The NMRE of a critical state grows as half the logarithm of its entanglement entropy, saturating the order of the bound in Eq.~\eqref{eq:entropy-ceiling-main}. The central charge enters only as the subleading constant.  By Theorem~\ref{thm:octave}, $\eta_{\rm emb}=\Theta((\ln L_{\rm eff})^{-1/2})$, so critical chains universally embezzle under LOCC.

Logarithmic entanglement alone does not imply unbounded nonlocal magic. An ideal random-singlet fixed-point state is a product of singlets, each equivalent to $\ket{\psi_{p}}$ with $p=\tfrac12$ up to local unitaries, over a random pairing of spins.  Across a cut crossed by $n_\times$ singlets it is therefore exactly $n_\times$ copies of $\ket{\psi_{1/2}}$, the maximally entangled case of Eq.~\eqref{eq:tensor-power-state}: $D_{\min}^{\rm NL}=0$, and by the dichotomy above the growing family $\ket{\psi_{ /q}}^{\otimes n_\times}$ never embezzles, despite its logarithmic disorder-averaged  entanglement~\cite{RefaelMoore2004,FagottiCalabreseMoore2011}.
Critical chains of Proposition~\ref{prop:critical} and random-singlet states thus share logarithmic entanglement scaling yet belong to opposite classes of nonlocal magic scaling: what decides is not the amount of entanglement but its delocalization across octaves $[\ell,2\ell]$.

As a final example we consider the canonical van Dam--Hayden catalysts $\ket{\Gamma_d}$~\cite{vanDam2003embezzling}, with the harmonic entanglement spectrum $\lambda_i=(iH_d)^{-1}$, $i=1,\ldots,d$, where $H_d=\sum_{i=1}^d i^{-1}$.  Theorem~\ref{thm:main} yields $D_{\min}^{\rm NL}( \ket{\Gamma_d})=\log_2\ln d-2+o(1)$ (Supplementary Section~\ref{sm:embezzlement}).  
For $d=2^\nu$ this lies within $O(1)$ of the entropy bound~\eqref{eq:entropy-ceiling-main}. Since $\eta_{\rm emb}=1/H_d$ exactly, it also lies about two bits below $-\log_2\eta_{\rm emb}$, close to the sharp constant of Theorem~\ref{thm:octave}(i).  The nearly uniform Schmidt weight of these states across logarithmic rank scales, roughly $\ln2/H_d$ per octave, keeps them far from every single Bell-pair sector.

\subsection*{$D_{\min}^{\rm NL}$ controls other nonlocal magic measures}
The NMRE is one nonlocal magic measure among several, and one may ask whether the behaviour established above, the growth laws and the embezzlement criterion, is specific to $D_{\min}^{\rm NL}$ or is shared by the other measures. The natural test is provided by the measure most widely used in many-body studies, the SRE. For a pure $N$-qubit state, let $Q_N(\psi):=2^{-N}\sum_{P\in\cP_N}\langle\psi|P|\psi\rangle^4$ be its Pauli fourth moment, so that $-\log_2Q_N$ is the second SRE~\cite{Leone2022SRE}, and let
  $
  \cM_{\rm NL}^{\rm SRE}(\ket{\Psi})
  :=-\log_2\max_{U_A,U_B}Q_N\!\left((U_A\otimes U_B)\ket{\Psi}\right)
  $
be the nonlocal SRE. Its optimization over local unitaries $U_A,U_B$ remains unsolved~\cite{Torre2026Spectrum,Liu2026Schmidt,Franchini2026Schmidt}. We show that the outcome of this optimization is nevertheless known: the NMRE fixes the nonlocal SRE up to universal constant factors.

\begin{theorem}[Equivalence of NMRE with the nonlocal SRE]
  \label{thm:SRE}
  For every bipartition, with $C_\star>1$ a universal, qubit-number-independent constant,
  \begin{equation}
   \max\left\{\frac{D_{\min}^{\rm NL}}{C_\star},\,
   \log_2\!\frac{2}{1+2^{-D_{\min}^{\rm NL}}}\right\}
   \le\cM_{\rm NL}^{\rm SRE}\le4\,D_{\min}^{\rm NL}.
   \label{eq:SRE-fidelity-main}
  \end{equation}
\end{theorem}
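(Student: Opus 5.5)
The plan is to prove the three inequalities in Eq.~\eqref{eq:SRE-fidelity-main} separately. Each one will be reduced to a fixed-basis statement about a single state $\ket{\psi}=(U_A\otimes U_B)\ket{\Psi}$, and only afterwards optimized over the local unitaries. Theorem~\ref{thm:main} is what makes the optimized fidelity side explicit.

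\emph{Upper bound $\cM_{\rm NL}^{\rm SRE}\le4D_{\min}^{\rm NL}$.} Take $U_A,U_B$ and a stabilizer state $\ket{S}$ attaining $F_{\rm NL}$; by Theorem~\ref{thm:main} one may take $\ket{\psi}=\ket{\Psi_{\rm CB}}$. Write $\ket{S}\!\bra{S}=2^{-N}\sum_{P\in G}s_PP$, where $G$ is its stabilizer group ($|G|=2^N$) and $s_P=\pm1$. Then
\[
F_{\rm NL}=2^{-N}\sum_{P\in G}s_P\langle\psi|P|\psi\rangle\le 2^{-N}\sum_{P\in G}|\langle\psi|P|\psi\rangle| .
\]
Dropping all Pauli strings outside $G$ from $Q_N$ and applying the power-mean inequality over the $2^N$ elements of $G$ gives
\[
Q_N(\psi)\ge 2^{-N}\sum_{P\in G}\langle\psi|P|\psi\rangle^4\ge\Bigl(2^{-N}\sum_{P\in G}|\langle\psi|P|\psi\rangle|\Bigr)^4\ge F_{\rm NL}^4 .
\]
Hence $\max_{U_A,U_B}Q_N\ge F_{\rm NL}^4$, which is the claimed upper bound.

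\emph{Lower bound $\log_2\frac{2}{1+F_{\rm NL}}$.} It suffices to show the fixed-basis inequality $Q_N(\psi)\le\frac{1}{2}\bigl(1+F_{\rm stab}(\psi)\bigr)$ for every pure $\ket{\psi}$. Applying it to each rotated state and using $F_{\rm stab}\le F_{\rm NL}$ then gives $\max_{U_A,U_B}Q_N\le\frac12(1+F_{\rm NL})$.

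For the fixed-basis inequality I would write $Q_N(\psi)=2^N\Tr\!\bigl[\Pi_4\,\psi^{\otimes4}\bigr]$ with $\Pi_4=4^{-N}\sum_PP^{\otimes4}$. By the Gross--Nezami--Walter analysis, $\Pi_4$ splits into the projector onto the symmetric-type part spanned by $\ket{S}^{\otimes4}$ ($S$ stabilizer) and a remainder. The fourth-moment contribution of the stabilizer part is bounded through the largest stabilizer overlap, giving the $F_{\rm stab}/2$ term. The remainder is bounded by $1/2$ using $\langle P\rangle^4\le\langle P\rangle^2$ together with $2^{-N}\sum_P\langle P\rangle^2=1$.

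If this decomposition is too lossy, the fallback is a direct argument on the distribution $\Xi_P=2^{-N}\langle P\rangle^2$. Since $Q_N=\mathbb E_\Xi\langle P\rangle^2$, the Pauli strings with $\langle P\rangle^2>1/2$ pairwise commute. They therefore generate an abelian group whose joint eigenprojector has overlap at least $F_{\rm stab}$ with $\ket{\psi}$, which should produce the same bound.

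\emph{Lower bound $D_{\min}^{\rm NL}/C_\star$.} This is where I expect the main difficulty. By Theorem~\ref{thm:octave}(i), it suffices to prove
\[
\max_{U_A,U_B}Q_N\le c\,\eta_{\rm emb}^{\alpha}
\]
for universal $c$ and $\alpha>0$, with $C_\star$ absorbing $c$, $\alpha$ and the constant $\log_2(3+2\sqrt2)$. For the regime where $D_{\min}^{\rm NL}$ is small, the second entry of the maximum already gives a linear lower bound with a universal constant.

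For the large-$D$ regime I would expand each Pauli string as $P=P_A\otimes P_B$ and use
\[
|\langle\psi|P_A\otimes P_B|\psi\rangle|=|\Tr(M^\dagger \tilde P_A M\tilde P_B^{T})|,
\]
where $M=\mathrm{diag}(\bm\mu)$ in rotated bases and $\tilde P_A$, $\tilde P_B$ are the rotated local Paulis. Large expectation values should force $\tilde P_A$ to approximately preserve the ordered spectrum. Summing over Pauli strings, I then want to show that $Q_N$ is controlled by the collision probability of the spectrum against its dyadic rescalings, roughly $\sum_j(\text{weight in octave } j)^2\lesssim\eta_{\rm emb}$. The delicate step is making this uniform over all local unitaries rather than only over $\ket{\Psi_{\rm CB}}$. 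There I would try a second-moment (unitary $4$-design-free) bound of $\sum_{P_A}|\tilde P_A\rangle\langle\tilde P_A|^{\otimes2}$ via the swap operator, which is basis independent.
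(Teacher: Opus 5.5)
Your upper bound $\cM_{\rm NL}^{\rm SRE}\le4D_{\min}^{\rm NL}$ is exactly the paper's argument: the projector identity plus the power mean over the stabilizer group, evaluated in the fidelity-optimal frame. Your reduction of the saturating bound to the fixed-basis inequality $F_{\rm stab}\ge2Q_N-1$ is also what the paper does; the paper simply cites it as a known testing inequality.

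Your Gross--Nezami--Walter sketch does not actually produce the constants $\tfrac12$ and $F_{\rm stab}/2$. The last step of your fallback also points the wrong way: the overlap of $\ket\psi$ with the eigenprojector of a non-maximal commuting group upper-bounds the fidelities with stabilizer states in that eigenspace, rather than lower-bounding $F_{\rm stab}$. The fallback is easy to repair.
\begin{itemize}
\item Let $G=\{P:\langle P\rangle^2>\tfrac12\}$ and $\Xi(G):=\sum_{P\in G}\Xi_P$. Since $\langle P\rangle^2\le\tfrac12$ off $G$, you get $Q_N\le\tfrac12+\tfrac12\Xi(G)$, i.e.\ $\Xi(G)\ge2Q_N-1$.
\item Extend the commuting set $G$ to a maximal commuting set $L$ of $2^N$ Pauli strings.
\item Apply Parseval over the $2^N$ stabilizer states $\ket{s}$ whose stabilizer group is $L$ up to signs, with $f_s=|\braket{s|\psi}|^2$. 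This gives $\Xi(L)=\sum_sf_s^2\le\max_sf_s\le F_{\rm stab}$.
\end{itemize}

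The genuine gap is the linear bound $D_{\min}^{\rm NL}/C_\star\le\cM_{\rm NL}^{\rm SRE}$ at large nonlocal magic, where the saturating bound is capped at one bit. The missing ingredient is the inverse theorem for the Gowers-$3$ norm of quantum states: $F_{\rm stab}(\ket\psi)\ge Q_N(\ket\psi)^{C_\star}$ in every fixed frame (Arunachalam--Bravyi--Dutt, with explicit exponents in Bao--van Dordrecht--Helsen). With it the bound takes one line and never touches the optimization of $Q_N$. You evaluate in the frame $U_A^\star\otimes U_B^\star$ that maximizes $Q_N$, so that $D_{\min}^{\rm NL}\le D_{\min}\bigl((U_A^\star\otimes U_B^\star)\ket\Psi\bigr)\le C_\star\,\cM_{\rm NL}^{\rm SRE}$.

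Your route instead needs $\max_{U_A,U_B}Q_N\le c\,\eta_{\rm emb}^{\alpha}$ uniformly over the whole local-unitary orbit. The reduction from such a bound to the theorem is fine. But proving it means controlling exactly the nonlocal-SRE optimization that is open, and you give no argument for it.
\begin{itemize}
\item The claim that large Pauli expectations force $\tilde P_A$ to approximately preserve the ordered spectrum is a heuristic, not a proof.
\item The swap-operator idea cannot work. The identity $\sum_{P_A}P_A\otimes P_A=2^{n_A}\,\mathrm{SWAP}$ is invariant under $U_A^{\otimes2}$. By contrast, $Q_N$ probes $\sum_PP^{\otimes4}$, which lies in the commutant of the Clifford group's fourth tensor power but not in that of $\{U^{\otimes4}\}$. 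The frame dependence of $Q_N$ is therefore invisible to any basis-independent second-moment identity.
\end{itemize}
Even in the paper, a direct spectral bound of the kind you want is available only on balanced cuts. That is Theorem~\ref{sm:thm:shell-sandwich}, imported from a companion work, giving $P_{2,{\rm NL}}\le C_{\rm sh}Q_{\rm sh}$ uniformly over the orbit. For arbitrary, possibly unequal cuts, the inverse theorem is the only input.
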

The two measures are therefore equivalent up to dimension-independent factors on every cut: they vanish on the same states and diverge for exactly the same families of states, so that the exactly solvable NMRE certifies the nonlocal SRE. By Theorem~\ref{thm:octave}, the nonlocal SRE of a family diverges if and only if the family is a universal entanglement embezzler.
The upper bound and the second, saturating lower bound in Eq.~\eqref{eq:SRE-fidelity-main} follow from pointwise inequalities between fidelity and Pauli
  moments~\cite{Haug2023Monotones,Haug2024Algorithms}, evaluated in optimal frames. The saturating bound behaves as $D_{\min}^{\rm NL}/2$ at small nonlocal magic but never exceeds one bit.
The linear lower bound, which takes over at large nonlocal magic, follows from the inverse theorem for the Gowers-$3$ norm of quantum states, $F_{\rm stab}\ge(Q_N)^{C_\star}$~\cite{ArunachalamBravyiDutt2024, Tarabunga2025BMSA}, proved with explicit exponents in Ref.~\cite{BaoDordrechtHelsen2025}: evaluating it in the purity-optimal local frame gives the first argument of the maximum in Eq.~\eqref{eq:SRE-fidelity-main}.
A dyadic-shell theorem for the computational-basis stabilizer purity of Ref.~\cite{Sierant2026CBOptimality}, restated in Supplementary Section~\ref{sm:sre}, bounds the optimal constant on balanced cuts: Eq.~\eqref{eq:SRE-fidelity-main} holds with $C_\star=5$.

The stabilizer fidelity also certifies the max-relative measures, which sit at the opposite end of the resource-theoretic hierarchy from the min-relative entropy.  
The stabilizer extent $\xi$~\cite{Bravyi2019LowRank}, which sets the cost of the classical simulation algorithms for magic states, and the generalized robustness $R_{\rm g}$~\cite{LiuWinter2022}, which measures how much stabilizer noise a state tolerates, both bound the NMRE from above in a fixed basis.  Optimized over local bases, the inequalities survive: $\min_{U_A,U_B}\log_2\xi\ge D_{\min}^{\rm NL}$ and $\min_{U_A,U_B}R_{\rm g}\ge F_{\rm NL}^{-1}-1$.  The NMRE thus provides an explicit, spectral lower bound for the nonlocal versions of both, whose own optimizers remain unknown.

\section*{Discussion}

We have demonstrated that nonlocal magic is an exactly computable property of a many-body state, and that it measures the shape of the entanglement spectrum: the magic carried by quantum correlations is the distance of the spectrum from the nearest Bell-pair sector, and the same quantity sets a state's power to embezzle entanglement.  
The central lesson is that the entanglement spectrum, 
through the way its weight is spread across logarithmic rank scales,
contains information that neither entanglement entropy nor fixed-basis magic captures.
R\'enyi entropies of any index cannot replace the NMRE, since flat dyadic spectra have $D_{\min}^{\rm NL}=0$ at arbitrarily large entanglement. Nor can fixed-basis measures, which change under local rotations that leave the spectrum untouched.

Removing this basis dependence is what the local-unitary minimization was introduced for~\cite{Cao2025magical,Qian2025Nonlocal}, and earlier work identified its zero set and derived entropic bounds. Subsequent Schmidt-basis analyses obtained exact values at selected cuts and conjectured optimizing states~\cite{Huang2026Spectral, Torre2026Spectrum, Liu2026Schmidt, Franchini2026Schmidt}.  For the stabilizer fidelity, Theorem~\ref{thm:main} resolves the optimization at arbitrary bipartitions and proves that the sorted computational-basis representative is globally optimal, which certifies the nonlocal SRE within dimension-independent factor on every cut (Theorem~\ref{thm:SRE}).
Theorem~\ref{thm:octave} shows that the same dyadic organization of Schmidt weight governs universal LOCC embezzlement~\cite{ZanoniTheurerGour2024}: catalytic power is set not by how much entanglement a state carries but by how its Schmidt weight is spread across rank scales.
This is particularly consequential in many-body systems, where two families with identical logarithmic entanglement scaling behave in opposite ways: the nonlocal magic of critical chains grows without bound and they embezzle universally, whereas random-singlet states have no nonlocal magic at all.

Nonlocal magic, via NMRE, can now be computed exactly wherever the entanglement spectrum is available, from any matrix-product-state or density-matrix renormalization group simulation at no additional cost, and measured by entanglement spectroscopy with an explicit error  bound.  
This opens a new window on many-body physics: the magic carried by correlations can be tracked across phase diagrams and quantum phase transitions, followed in time after quenches, and contrasted between ergodic and nonergodic dynamics, in theory and experiment alike.
The window is not confined to condensed matter.  Theorem~\ref{thm:main}  applies to any bipartite pure state, and the NMRE quantifies nonlocal magic exactly wherever entanglement across a cut is meaningful, from gauge theories and particle and nuclear scattering to the electronic structure of molecules.

Several questions remain open.  Finding and proving the full local-unitary optimizers of the nonlocal SRE, stabilizer extent, generalized robustness and mana would determine how far the sorted computational basis geometry persists beyond the stabilizer fidelity.
For mixed states, no single spectrum determines a local-unitary orbit, and the right analogue of the spectral principle remains to be identified.  At conformal points, it remains to identify which CFT data fix the partial sums of the ordered Schmidt coefficients, to establish Eq.~\eqref{eq:CFT-main} beyond the smoothed Calabrese--Lefevre spectrum, and to test the predicted logarithmic scaling in finite critical chains.
Finally, sample-efficient estimators for partial sums of ordered Schmidt coefficients would connect these theoretical questions to scalable experiments.

\section*{Methods}

\subsection*{Schmidt-vector overlaps and flat stabilizer spectra}

We give a full matrix proof of Theorem~\ref{thm:main}. We choose product
bases and represent a bipartite vector by its coefficient matrix:
$
 \ket{\psi_X}=\sum_{a,b}(X)_{ab}\ket a_A\ket b_B.
 \label{eq:coefficient-matrix}
$
Its Schmidt amplitudes are the singular values of $X$.  If local unitaries act on the state, the coefficient matrix transforms as $X\mapsto U_AXU_B^{T}$.

\begin{lemma}[Schmidt-vector overlap]
\label{lem:Schmidt-overlap}
Let $\ket\psi$ and $\ket\phi$ be pure bipartite states with decreasing
Schmidt amplitudes $\alpha_0\ge\alpha_1\ge\cdots$ and
$\beta_0\ge\beta_1\ge\cdots$, respectively.  Then
\begin{equation}
 \max_{U_A,U_B}
 |\bra{\phi}(U_A\otimes U_B)\ket{\psi}|
 =\sum_x \alpha_x\beta_x.
 \label{eq:Schmidt-overlap}
\end{equation}
\end{lemma}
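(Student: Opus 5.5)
We will work with the coefficient-matrix representation introduced just above. Fix product bases and let $X$ and $Y$ be the coefficient matrices of $\ket\psi$ and $\ket\phi$, so that $\ket{\psi}=\sum_{a,b}X_{ab}\ket a_A\ket b_B$ and similarly for $\ket\phi$. If the two states live on registers of different sizes on either side, we first embed them in a common space by zero padding. This changes neither the Schmidt amplitudes nor any overlap.

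The first step is to write the overlap as a trace. Under local unitaries $X\mapsto U_AXU_B^{T}$, so
\[
\bra{\phi}(U_A\otimes U_B)\ket{\psi}=\sum_{a,b}\overline{Y_{ab}}\,(U_AXU_B^{T})_{ab}=\Tr\!\bigl(Y^\dagger U_A X U_B^{T}\bigr).
\]
The problem therefore reduces to maximizing $|\Tr(Y^\dagger U X V)|$ over unitaries $U,V$ of the appropriate dimensions, since $U_B^T$ ranges over all unitaries as $U_B$ does.

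The upper bound will come from von Neumann's trace inequality, as announced after Theorem~\ref{thm:main}. For any matrices $M,N$ of the same size,
\[
|\Tr(M^\dagger N)|\le\sum_x\sigma_x(M)\,\sigma_x(N),
\]
with singular values taken in decreasing order. We apply this with $M=Y$ and $N=UXV$. Unitary multiplication leaves singular values unchanged, so $\sigma_x(UXV)=\alpha_x$ and $\sigma_x(Y)=\beta_x$. This gives $|\bra\phi(U_A\otimes U_B)\ket\psi|\le\sum_x\alpha_x\beta_x$ for every choice of $U_A,U_B$. If we want the argument self-contained rather than citing von Neumann, we can derive the inequality as follows. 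Write $Y=W_1\,\mathrm{diag}(\beta)\,W_2^\dagger$ by singular value decomposition, so the trace becomes $\sum_{x,y}\beta_x\alpha_y\,P_{xy}\overline{Q_{xy}}$ for unitaries $P,Q$. By Cauchy--Schwarz, $|P_{xy}\overline{Q_{xy}}|$ is dominated entrywise by a doubly substochastic matrix. The rearrangement inequality, applied at the extreme points (Birkhoff), then gives the bound. This is the only step with real content, and we expect it to be the main obstacle if it is not simply quoted.

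The last step is to show the bound is attained, by aligning the Schmidt bases. Take Schmidt decompositions
\[
\ket\psi=\sum_x\alpha_x\ket{a_x}\ket{b_x},\qquad \ket\phi=\sum_x\beta_x\ket{a'_x}\ket{b'_x}.
\]
Choose unitaries with $U_A\ket{a_x}=\ket{a'_x}$ and $U_B\ket{b_x}=\ket{b'_x}$ for all $x$, extending them arbitrarily on the orthogonal complements. Orthonormality then gives $\bra\phi(U_A\otimes U_B)\ket\psi=\sum_x\alpha_x\beta_x$, which is real and nonnegative. This equals the upper bound, so the maximum is exactly $\sum_x\alpha_x\beta_x$.
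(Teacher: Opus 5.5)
Your proposal is correct and follows the paper's proof: the same trace formula $\Tr(Y^\dagger U_AXU_B^{T})$, von Neumann's trace inequality for the upper bound, and attainment by aligning the Schmidt bases. Your alignment is the paper's explicit SVD choice $U_A=V_1W_1^\dagger$, $U_B^{T}=W_2V_2^\dagger$ (with both Schmidt decompositions zero-padded to a common length) written in Schmidt-vector language. Your optional self-contained derivation of von Neumann's inequality (Cauchy--Schwarz to get a doubly substochastic matrix, then Birkhoff and rearrangement) is extra; the paper simply cites the inequality.
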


\begin{proof}
Writing $X,Y\in\mathbb C^{d_A\times d_B}$ for the two coefficient matrices, the modulus of the overlap is $|\Tr(Y^\dagger U_AXU_B^T)|$.  
Local unitaries preserve singular values, and the von Neumann trace inequality~\cite{vonNeumann1937, Mirsky1975, HornJohnson1991} bounds this modulus by the ordered singular-value scalar product $\sum_i s_i(X)s_i(Y)=\sum_x\alpha_x\beta_x$, where the sum runs over the smaller local dimension.
The bound is attained. Take full singular-value decompositions $X=W_1\Sigma_XW_2^\dagger$ and $Y=V_1\Sigma_YV_2^\dagger$, where $W_1,V_1\in U(d_A)$, $W_2,V_2\in U(d_B)$, and $\Sigma_X,\Sigma_Y\in\mathbb R^{d_A\times d_B}$ are rectangular diagonal matrices whose nonnegative diagonal entries are the singular values in nonincreasing order.
The choice $U_A=V_1W_1^\dagger$, $U_B=\bar V_2W_2^T$ gives $U_B^T=W_2V_2^\dagger$ and hence $Y^\dagger U_AXU_B^T
   =V_2\Sigma_Y^\dagger\Sigma_XV_2^\dagger$.
Its trace is $\Tr(\Sigma_Y^\dagger\Sigma_X)=\sum_x\alpha_x\beta_x$, proving attainability.
\end{proof}

We also use the standard bipartite normal form of pure stabilizer
states~\cite{Fattal2004}. Across the cut between subsystems $A$ and $B$, every such state can be brought by party-local Clifford circuits (which may contain entangling Clifford gates within $A$ and within $B$) to a tensor product of $k$ Bell pairs shared between the parties and $\ket0$ spectators on all remaining qubits, for some $0\le k\le\nu$. 
The Schmidt vector of such a stabilizer state is therefore $\bm u_{2^k}^{1/2}=2^{-k/2}(1,\ldots,1,0,\ldots,0)$ with $2^k$ nonzero entries. 
Conversely, every $k$ with $0\le k\le\nu$ occurs: the state $\ket{\Phi_k}=2^{-k/2}\sum_{x=0}^{2^k-1}\ket{x}_A\ket{x}_B$, which is $k$ Bell pairs on designated qubits and $\ket0$ elsewhere, is a stabilizer state of Schmidt rank $2^k$.

\subsection*{Proof of Theorem~\ref{thm:main}}

Lemma~\ref{lem:Schmidt-overlap} gives, for every stabilizer state
  $\ket{s}\in\cC_N\ket{0^N}$ of Schmidt rank $r=2^k$,
  \begin{equation}
   \max_{U_A,U_B}
   |\bra{s}(U_A\otimes U_B)\ket{\Psi}|^2
   =\frac1r\left(\sum_{x=0}^{r-1}\mu_x\right)^2.
   \label{eq:fixed-rank-overlap}
  \end{equation}
The maximum over local unitaries and the maximum over stabilizer states are maxima over a Cartesian product and may therefore be interchanged.
Consequently, using the bipartite stabilizer normal form and Eq.~\eqref{eq:fixed-rank-overlap} for each stabilizer state of Schmidt rank $2^k$, with $0\le k\le\nu$, we obtain
\begin{align}
   F_{\rm NL}(\Psi)
   &=
   \max_{\ket{s}\in\cC_N\ket{0^N}}
   \max_{U_A,U_B}
   |\bra{s}(U_A\otimes U_B)\ket{\Psi}|^2
   \notag\\
   &\le
   \max_{0\le k\le\nu}
   \frac{1}{2^k}
   \left(\sum_{x=0}^{2^k-1}\mu_x\right)^2.
   \label{eq:main-upper-bound}
\end{align}

For the converse, consider the ordered, zero-padded Schmidt decomposition
  \begin{equation}
  \ket{\Psi}
   =\sum_{x=0}^{2^\nu-1}
   \mu_x\ket{a_x}_A\ket{b_x}_B,
   \quad
   \mu_0\ge\mu_1\ge\cdots\ge\mu_{2^\nu-1}\ge0.
  \end{equation}
The zero-coefficient Schmidt vectors are chosen so that $\{\ket{a_x}\}_{x<2^\nu}$ and $\{\ket{b_x}\}_{x<2^\nu}$ are orthonormal families. We choose local unitaries $U_A^{\rm CB}$ and $U_B^{\rm CB}$ satisfying $U_A^{\rm CB}\ket{a_x}_A=\ket{x}_A$ and $U_B^{\rm CB}\ket{b_x}_B=\ket{x}_B$ for every $x<2^\nu$. 
On the smaller party this specifies the unitary on a full basis. If the parties have unequal dimensions, the map on the larger party is extended arbitrarily from the Schmidt support to a full orthonormal basis. With the convention $\mu_x\ge0$, these unitaries produce exactly the sorted computational-basis representative
  \begin{equation}
   \ket{\Psi_{\rm CB}}
   :=
   (U_A^{\rm CB}\otimes U_B^{\rm CB})\ket{\Psi}
   =
   \sum_{x=0}^{2^\nu-1}\mu_x\ket{x}_A\ket{x}_B,
   \label{eq:CB-representative-proof}
  \end{equation}
which therefore lies in the local-unitary orbit of $\ket{\Psi}$.

For each $k$, use the state $\ket{\Phi_k}$ defined above in the same  binary encoding. It consists of Bell pairs between the $k$ pairs of designated qubits carrying the least significant bits, with $\ket0$ on every other qubit, and is therefore a stabilizer state. Its squared overlap with $\ket{\Psi_{\rm CB}}$ is
\begin{equation}
  |\braket{\Phi_k|\Psi_{\rm CB}}|^2
  =\frac{1}{2^k}
   \left(\sum_{x=0}^{2^k-1}\mu_x\right)^2
  =F_k.
   \label{eq:CB-branch-overlap}
\end{equation}
Indeed, the support $\{0,\ldots,2^k-1\}$ of $\ket{\Phi_k}$ consists of the binary strings whose $\nu-k$ most significant bits vanish. These supports are nested initial segments of one common label ordering, so a single sorted representative serves every $k$.

Since every $\ket{\Phi_k}$ is a stabilizer state and $\ket{\Psi_{\rm CB}}$ lies in the local-unitary orbit of $\ket{\Psi}$, it follows that
\begin{align}
   F_{\rm NL}(\ket{\Psi})
   &\ge F_{\rm stab}(\ket{\Psi_{\rm CB}})
   \ge \max_{0\le k\le\nu} F_k
   \notag\\
   &=
   \max_{0\le k\le\nu}
   \frac{1}{2^k}
   \left(\sum_{x=0}^{2^k-1}\mu_x\right)^2.
   \label{eq:main-lower-bound}
\end{align}
Together with Eq.~\eqref{eq:main-upper-bound}, this proves Eq.~\eqref{eq:main-formula} and shows that $\ket{\Psi_{\rm CB}}$ maximizes the stabilizer fidelity over all choices of local unitaries $U_A,U_B$. Moreover, Eq.~\eqref{eq:CB-branch-overlap} attains the optimal rank-$2^k$ value of Eq.~\eqref{eq:fixed-rank-overlap} for every $k$. Thus the same computational-basis representative is simultaneously optimal in every allowed Schmidt-rank sector, with the corresponding stabilizer witness $\ket{\Phi_k}$ depending on $k$.

The only stabilizer-specific inputs are the flat dyadic Schmidt spectra and the existence of the witnesses $\ket{\Phi_k}$ for every $0\le k\le\nu$ across the selected cut, so the argument applies independently to every bipartition. The competition between neighboring branches $k$, which locates the sector walls, and the resulting cell decomposition of spectrum space, including the two-qubit family $\ket{\psi_p}$ of the main text, are described in Supplementary Section~\ref{sm:exact}.

\subsection*{Dyadic shells and upper bounds}

We group the sorted spectrum into the dyadic shells $D_0=\{0\}$ and $D_j=\{2^{j-1},\ldots,2^j-1\}$ for $1\le j\le\nu$, and define the shell weights $p_j:=\sum_{x\in D_j}\lambda_x$. 
Since $\lambda_x\ge\lambda_{2^{j-1}}$ for $x<2^{j-1}$, the branch of rank $2^{j-1}$ gives $F_{\rm NL}\ge2^{j-1}\lambda_{2^{j-1}}\ge p_j$ for $j\ge1$, while the rank-one branch gives $F_{\rm NL}\ge\lambda_0=p_0$. 
The $\nu+1$ shell weights sum to one, so $F_{\rm NL}\ge(\nu+1)^{-1}$, which is the universal bound given in the main text. For the upper bound, let $P_{\rm sh}:=\max_j p_j$. 
The Cauchy--Schwarz inequality gives $\sum_{x\in D_j}\mu_x\le2^{(j-1)/2}\sqrt{P_{\rm sh}}$ for $j\ge1$ and $\mu_0\le\sqrt{P_{\rm sh}}$. 
The support of the rank-$2^k$ branch is the union of the shells $D_0,\ldots,D_k$, so summing these bounds gives a geometric series, $\sum_{x<2^k}\mu_x\le\sqrt{P_{\rm sh}}\,\bigl(1+\sum_{j=1}^{k}2^{(j-1)/2}\bigr)$, and hence $F_k<(3+2\sqrt2)P_{\rm sh}$ for every $k$. Together, \begin{equation}
 P_{\rm sh}\le F_{\rm NL}
 \le(3+2\sqrt2)P_{\rm sh}.
 \label{eq:scale-min-entropy}
\end{equation}
Thus $D_{\min}^{\rm NL}$ differs by at most
$\log_2(3+2\sqrt2)$ from the min-entropy of Schmidt weight across fixed
dyadic scales.  Ordinary R\'enyi entropies cannot replace this
scale-resolved quantity, because flat dyadic spectra have
$D_{\min}^{\rm NL}=0$ at arbitrarily large entanglement.

\begin{proposition}[Entropy bound]
\label{prop:entropy-ceiling}
For every sorted Schmidt spectrum, with
$S_1=-\sum_x\lambda_x\log_2\lambda_x$, one has
$F_{\rm NL}\ge(2S_1+3)^{-1}$, or equivalently,
$D_{\min}^{\rm NL}\le\log_2(2S_1+3)$.
\end{proposition}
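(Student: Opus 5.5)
The plan is to combine two consequences of Theorem~\ref{thm:main} that are already available: a pointwise decay bound on the sorted spectrum, and the shell bound $p_j\le F_{\rm NL}$ from the dyadic-shell argument above. Write $F:=F_{\rm NL}$ and $t:=1/F\ge1$. The target is $S_1\ge (t-3)/2$, which is equivalent to $F_{\rm NL}\ge(2S_1+3)^{-1}$.

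\emph{Step 1 (pointwise decay).} Fix $x$ and choose $k$ with $2^{k-1}<x+1\le 2^k$; this needs $k\le\nu$, which holds since $x<2^\nu$. The sorted amplitudes satisfy $\mu_y\ge\mu_x$ for all $y\le x$. Hence the rank-$2^k$ branch gives
\[
F\ge F_k\ge 2^{-k}\Bigl(\sum_{y\le x}\mu_y\Bigr)^2\ge 2^{-k}(x+1)^2\lambda_x\ge (x+1)\lambda_x .
\]
The last inequality uses $(x+1)/2^k>1/2$, so I should double-check that this factor does not spoil the bound. If it does, I will fall back on the cruder but safe bound $2^{k-1}\lambda_x\le F$ used for the shells. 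In either case $-\log_2\lambda_x\ge \log_2(x+1)-\log_2F\ge \log_2(x+1)$, up to a harmless constant that I will track.

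\emph{Step 2 (reduce entropy to shells).} From Step~1, $S_1=\sum_x\lambda_x(-\log_2\lambda_x)\ge\sum_x\lambda_x\log_2(x+1)$. For $x\in D_j$ with $j\ge1$ we have $x+1>2^{j-1}$, so $\log_2(x+1)\ge j-1$. Therefore
\[
S_1\ge\sum_{j\ge0}c_j\,p_j,\qquad c_j:=\max(j-1,0).
\]

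\emph{Step 3 (linear program).} Minimize $\sum_jc_jp_j$ subject to $\sum_jp_j=1$ and $0\le p_j\le F$; the upper bound is the shell inequality $p_j\le F_{\rm NL}$ established above. Since $c_j$ is nondecreasing, the greedy allocation is optimal: fill shells $0,1,2,\ldots$ to capacity $F$ until the mass is exhausted. This yields at least $F\sum_{i=0}^{\lceil t\rceil-2}\max(i-1,0)$ minus a boundary correction. A clean way to control the non-integer case is to compare with the continuous relaxation, giving a lower bound of the form $(t-1)(t-2)/(2t)$, or something of that order. Then $(t-1)(t-2)/(2t)=(t-3)/2+1/t\ge(t-3)/2$, which is exactly $S_1\ge(1/F-3)/2$.

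The main obstacle is the bookkeeping of constants in Steps~1 and~3. This covers the factor $(x+1)/2^k\in(1/2,1]$, whether $\log_2(x+1)\ge j-1$ can be sharpened to $j$ on part of each shell, and the fractional last shell in the greedy fill. These must all conspire to give exactly the additive $3$. If the crude version above only gives $2S_1+c$ with a slightly larger $c$, I will recover slack by keeping the $-\log_2F\ge0$ term dropped in Step~1. I will also use $\log_2(x+1)\ge\log_2(2^{j-1}+1)$ rather than $j-1$ in the low shells $j=1,2$.
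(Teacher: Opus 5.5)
Your plan is the paper's proof. It uses pointwise decay of the sorted spectrum, reduces to dyadic shells with costs $(j-1)_+$, fills greedily under the capacity $p_j\le F_{\rm NL}$, and finishes with $(t-1)(t-2)/(2t)=(t-3)/2+1/t$. Two steps need repair, both minor.

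\textbf{Step 1.} The main line is false, not just loose. The inequality $2^{-k}(x+1)^2\lambda_x\ge(x+1)\lambda_x$ holds only when $x+1=2^k$. Worse, the conclusion $F_{\rm NL}\ge(x+1)\lambda_x$ itself fails: for the flat rank-three spectrum at $x=2$, one has $(x+1)\lambda_x=1>\tfrac34=F_{\rm NL}$.

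Your fallback also does not give $-\log_2\lambda_x\ge\log_2(x+1)$ ``in either case''. What $2^{k-1}\lambda_x\le 2^{k-1}\lambda_{2^{k-1}}\le F_{\rm NL}$ actually gives is $-\log_2\lambda_x\ge(k-1)+\log_2F_{\rm NL}^{-1}$ on $D_k$, together with $\lambda_0\le F_{\rm NL}$ on $D_0$. That is all Step~2 uses, so this route does work. Keeping the $\log_2F_{\rm NL}^{-1}$ term even yields the slightly stronger $S_1\ge\log_2F_{\rm NL}^{-1}+\sum_j(j-1)_+p_j$.

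The paper does not invoke $F_{\rm NL}$ here at all. Sortedness and normalization give $(x+1)\lambda_x\le\sum_{y\le x}\lambda_y\le1$ directly, hence $-\log_2\lambda_x\ge\log_2(x+1)$.

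\textbf{Step 3.} The target $(t-1)(t-2)/(2t)$ is correct, but ``compare with the continuous relaxation'' does not establish it. Spreading density $1/t$ over $[0,t]$ with cost $(s-1)_+$ gives $(t-1)^2/(2t)$. That exceeds the discrete optimum, so it is not a lower bound. The safe cost $(s-2)_+$ is a valid lower bound but loses one unit in the additive constant.

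Compute the greedy cost exactly instead. With $m=\lfloor t\rfloor$ and $\delta=t-m$, it equals $(m-1)(m-2+2\delta)/(2t)$. This is $1/(2t)$ times the piecewise-linear interpolant, at the integers, of the convex function $(t-1)(t-2)$. It therefore exceeds $(t-1)(t-2)/(2t)$ by exactly $\delta(1-\delta)/(2t)\ge0$. This is the paper's identity $C_{\rm gr}-(t/2-3/2)=(2+\delta-\delta^2)/(2t)$ in another form.
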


\begin{proof}
Because the spectrum is sorted, $(x+1)\lambda_x \le\sum_{y=0}^{x}\lambda_y \le1$, and hence $\lambda_x\le1/(x+1)$. Moreover, for $x\in D_k$, we have $\log_2(x+1)\ge(k-1)_+$, where $ (k-1)_+:=\max\{k-1,0\}$. Hence, 
\begin{equation}
   S_1=\sum_x\lambda_x\log_2\frac1{\lambda_x}
   \ge\sum_x\lambda_x\log_2(x+1)
   \ge\sum_{k=0}^{\nu}(k-1)_+\,p_k .
   \label{eq:entropy-shell-lower}
  \end{equation} 
  The shell bound proved above gives $p_k\le F_{\rm NL}$ for every $k$, while $\sum_kp_k=1$. Set $z:=F_{\rm NL}^{-1}$, $m:=\lfloor z\rfloor$ and $\delta:=z-m\in[0,1)$; the universal bound
  $F_{\rm NL}\ge(\nu+1)^{-1}$ ensures $z\le\nu+1$. Among all probability vectors with $p_k\le z^{-1}$, the linear cost $\sum_k(k-1)_+p_k$ is minimized by filling the lowest-cost shells
  first: weight $z^{-1}$ in each of the shells $k=0,\ldots,m-1$ and, when $\delta>0$, the remaining weight $\delta z^{-1}$ in shell $k=m$. This allocation costs $C_{\rm
  gr}=(m-1)(m+2\delta-2)/(2z)$, and a direct calculation gives $C_{\rm gr}-(z/2-3/2)=(2+\delta-\delta^2)/(2z)\ge0$. Consequently,
  \begin{equation}
   S_1\ge\sum_{k=0}^{\nu}(k-1)_+\,p_k\ge\frac{1}{2F_{\rm NL}}-\frac32 ,
   \label{eq:greedy-filling-bound}
  \end{equation}
  and rearranging gives $F_{\rm NL}\ge(2S_1+3)^{-1}$, which proves the proposition and Eq.~\eqref{eq:entropy-ceiling-main}.
  \end{proof}

\subsection*{Proof of Theorem~\ref{thm:octave}}
  
Theorem~\ref{thm:octave} is stated for arbitrary finite-dimensional states. A spectrum whose dimension is not a power of two is zero-padded to qubit registers, and we write $d=2^\nu$ for the dimension of the smaller party after padding. 
The value of $F_{\rm NL}$ is independent of the choice of embedding: any two embeddings into the same registers are related by party-local  unitaries, while enlarging the registers leaves the existing branches unchanged and adds only branches beyond the Schmidt rank. For those branches the numerator is constant, so their fidelities decrease inversely with rank.
For a dyadic rank $q$, write  $F^{[q]}:=q^{-1}\bigl(\sum_{x<q}\mu_x\bigr)^2$, and, for an integer $\ell\ge0$, let $\eta_\ell:=\sum_{x=\ell}^{2\ell}\lambda_x$ denote the weight of the octave $[\ell,2\ell]$. Thus $\eta_{\rm emb}=\max_{\ell\in\mathbb Z_{\ge0}}\eta_\ell$. Assume $d\ge2$, the case $d=1$ being immediate. For $d/2\le\ell\le d-1$, zero padding gives $\eta_\ell=\sum_{x=\ell}^{d-1}\lambda_x$, and
  $
   \eta_\ell
   \le\sum_{x=\ell-1}^{d-2}\lambda_x
   \le\eta_{\ell-1}.
  $  
The first inequality follows from sortedness, while the second follows from $\{\ell-1,\ldots,d-2\}\subseteq \{\ell-1,\ldots,2\ell-2\}$. Since $\eta_\ell=0$ for $\ell\ge d$, the maximum is attained at some $0\le\ell\le d/2-1$, for which $2\ell+1<d$.

  For $\ell=0$ the rank-one branch equals $\lambda_0=\eta_0$. For $1\le\ell\le d/2-1$, let $q$ be the smallest power of two with $q\ge2\ell+1$, so that $q\le\min\{4\ell,d\}$. If
  $\mu_\ell>0$, sortedness gives $\sum_{x<\ell}\mu_x\ge\ell\mu_\ell$ and $\eta_\ell\le\mu_\ell\sum_{x=\ell}^{2\ell}\mu_x$; since the first $q$ coefficients include both sums,
  \begin{equation}
   \sum_{x=0}^{q-1}\mu_x\ge\ell\,\mu_\ell+\frac{\eta_\ell}{\mu_\ell}\ge2\sqrt{\ell\,\eta_\ell}
   \label{eq:octave-AMGM}
  \end{equation}
  by the AM--GM inequality, and hence $F^{[q]}\ge4\ell\,\eta_\ell/q\ge\eta_\ell$. If $\mu_\ell=0$ the octave has zero mass. Maximizing over $\ell$ proves $\eta_{\rm emb}\le F_{\rm NL}$.

Conversely, each dyadic shell $D_j=\{a,\ldots,2a-1\}$ with $a=2^{j-1}$ lies inside the octave $\{a,\ldots,2a\}$, and $D_0$ is the octave $\ell=0$, so $P_{\rm sh}\le\eta_{\rm emb}$. Substituting this into Eq.~\eqref{eq:scale-min-entropy} proves the left-hand bound of Theorem~\ref{thm:octave}. The constant is asymptotically sharp: a family of spectra with $F_{\rm NL}/\eta_{\rm emb}\to3+2\sqrt2$ is given in Supplementary Section~\ref{sm:properties}.

\subsection*{Haar-random spectra}
For a Haar-random state on $\mathbb C^d\otimes\mathbb C^D$, with $d=2^\nu\le D$ and $d/D\to c\in(0,1]$, the rescaled eigenvalues $\xi_x=d\lambda_x$ are distributed according to the  Marchenko--Pastur law, and for a typical state, i.e. with probability tending to one as $d\to\infty$, the largest one approaches the upper edge, $\xi_0\to(1+\sqrt  c)^2$~\cite{marvcenko1967distribution,Nadal2011,BaiYin1993}. For each fixed dyadic relative rank $\alpha=2^{-j}$, the corresponding branch of Eq.~\eqref{eq:main-formula} approaches  $\Phi_c(\alpha):=I_c(\alpha)^2/\alpha$ for a typical state, where $I_c(\alpha)$ is the contribution of the largest fraction $\alpha$ of the eigenvalues to the mean of $\sqrt x$ under  the Marchenko--Pastur law, so that $I_c(1)$ is the full mean. The ordering of the eigenvalues and the convergence of the upper edge control the small-rank branches uniformly, so that for a typical state $F_{\rm NL}\to\max_{j\ge0}\Phi_c(2^{-j})$.

The Cauchy--Schwarz inequality gives $\Phi_c(\alpha)\le\alpha+\sqrt{c\alpha(1-\alpha)} \le(1+\sqrt3)/4$ for $\alpha\le\tfrac14$, whereas $\Phi_c(1)\ge\Phi_1(1)=64/(9\pi^2)$. Thus only the half-rank and full-rank branches can maximize the envelope. Their direct comparison shows that the half-rank branch wins at $c=1$, while the full-rank branch wins for $c\le\tfrac12$. Consequently, at the fixed-imbalance ratios $c=2^{-\Delta}$,
  \begin{equation}
   F_{\rm NL}
   \xrightarrow{\Pr}
   \begin{cases}
    \dfrac{128}{9\pi^{2}}\sin^{6}\!\varphi_\star,
    & \Delta=0,\\[7pt]
    \left[
     {}_2F_1\!\left(-\dfrac12,\dfrac12;2;2^{-\Delta}\right)
    \right]^2,
    & \Delta\ge1,
   \end{cases}
   \label{eq:Haar-limit-main}
  \end{equation}
where $\varphi_\star\in(0,\pi/2)$ is the unique solution of $2\varphi_\star-\sin2\varphi_\star=\pi/2$. Taking $-\log_2$ gives the limit $D_\Delta$ of Proposition~\ref{prop:Haar}, and $-\ln\Phi_c(1)=c/4+O(c^2)$ gives its large-imbalance expansion. Details are given in Supplementary Section~\ref{sm:Haar}.

\subsection*{Critical spectra}
The smoothed Calabrese--Lefevre ansatz~\cite{Calabrese2008spectrum} assumes $R_\alpha=\Tr\rho_A^\alpha=e^{-b(\alpha-1/\alpha)}$ with $b=-\ln\lambda_{\max}$, so that $S_1=2b/\ln2$. In terms of the depth $u=\ln(\lambda_{\max}/\lambda)$, the number of eigenvalues with depth at most $u$ is $N_b(u)=I_0(2\sqrt{bu})$, and the sum of their Schmidt amplitudes $\sqrt{\lambda}$ is $G_b(u)=e^{-b/2}\bigl[1+\int_0^u e^{-v/2}\sqrt{b/v}\, I_1(2\sqrt{bv})\,dv\bigr]$, where $I_0$ and $I_1$ are modified Bessel functions. 
Treating the rank continuously, set $r=N_b(u)$; the resulting fidelity envelope is $G_b(u)^2/r$, sampled at $r=2^k$, $0\le k\le\nu$. Writing $u=bw^2$, Bessel asymptotics and an endpoint Laplace approximation give, uniformly on compact subsets of $0<w<2$,
  \begin{equation}
   \frac{G_b(bw^2)^2}{N_b(bw^2)}
   =\frac{2}{\sqrt{\pi b}}\,\frac{w^{-1/2}}{(2-w)^2}\,
   e^{-b(w-1)^2}\bigl[1+O(b^{-1})\bigr].
   \label{eq:CL-envelope-methods}
  \end{equation}
Explicit bounds show that the envelope is exponentially smaller for $w\le\tfrac14$ and $w\ge\tfrac74$. Together with these bounds, a differentiable uniform expansion near $w=1$ places the maximum at $w_\star=1+\tfrac{3}{4b}+O(b^{-2})$, with value $2/\sqrt{\pi b}\,[1+O(b^{-1})]$. Rounding to the nearest dyadic rank shifts $\ln r$ by at most $\tfrac12\ln2$, which changes the value only by a relative $O(b^{-1})$. Hence $F_{\rm NL}=2/\sqrt{\pi b}\,[1+O(b^{-1})]$ at $\ln r_\star=2b-\tfrac12\ln b+O(1)$. For a bulk interval, $b=(c_{\rm CFT}/6)\ln L_{\rm eff}+O(1)$~\cite{CalabreseCardy2004}, and therefore $r_\star=e^{O(1)}L_{\rm eff}^{c_{\rm CFT}/3}/ \sqrt{\ln L_{\rm eff}}$. This polynomial rank lies below the exponentially growing local Hilbert-space cutoff for large spin blocks, and the preceding asymptotics yield Eq.~\eqref{eq:CFT-main}. The result is conditional on the smoothed ansatz, fixed-index CFT scaling alone is insufficient (Supplementary Section~\ref{sm:CL}).

\subsection*{Numerical evaluation}

Results in Fig.~\ref{fig:classes} follow from Eq.~\eqref{eq:main-formula} applied to explicitly known spectra.  For the Haar points we sampled normalized complex Gaussian vectors at balanced cuts of
$N=8,10,\ldots,16$ qubits,
and averaged $S_1$ and $D_{\min}^{\rm NL}$ over $150$ to $4000$ samples.
The van Dam--Hayden values use the harmonic spectrum of $\ket{\Gamma_d}$ for $d=2^\nu$ up to $\nu=2900$, with the partial sums $\sum_{i\le2^k}i^{-1/2}$ and $H_d$ evaluated from their Euler--Maclaurin expansions.
The tensor-power values use the product spectrum of $\bm p=(\tfrac12,\tfrac3{10},\tfrac15)$ for $m\le10^3$, with exact integer bookkeeping of the degeneracies in logarithmic space.

\section*{Acknowledgements}
P.S. thanks Xhek Turkeshi for collaboration on related topics, for a careful reading of the manuscript and for valuable feedback.
P.S. thanks Poetri Sonya Tarabunga bringing Ref.~\cite{Tarabunga2025BMSA} to our attention.
P.S. acknowledges fellowship within the Generaci\'on D initiative,
Red.es, Ministerio para la Transformaci\'on Digital y de la Funci\'on
P\'ublica, for talent attraction (C005/24-ED CV1), funded by the European
Union NextGenerationEU funds through PRTR.

\emph{Note added.---}After the completion of this manuscript, we became aware of a closely related independent work by L. Leone and A. Hamma~\cite{LeoneHammaToAppear}. Where the results overlap, they are consistent.

%

\pagebreak 

\onecolumngrid
\setcounter{secnumdepth}{3}
\begin{center}
{\large\bfseries Supplementary Information for\\[4pt] \papertitle}\\[10pt]
Piotr Sierant
\end{center}
\vspace{6pt}
\twocolumngrid 

\setcounter{equation}{0}
\setcounter{section}{0}
\setcounter{theorem}{0}
\setcounter{proposition}{0}
\setcounter{corollary}{0}
\setcounter{lemma}{0}
\renewcommand{\theequation}{S\arabic{equation}}
\renewcommand{\thesection}{S\arabic{section}}
\renewcommand{\thetheorem}{S\arabic{theorem}}
\renewcommand{\theproposition}{S\arabic{proposition}}
\renewcommand{\thecorollary}{S\arabic{corollary}}
\renewcommand{\thelemma}{S\arabic{lemma}}
\setcounter{figure}{0}
\renewcommand{\thefigure}{S\arabic{figure}}
\renewcommand{\theHfigure}{supp.fig.\arabic{figure}}
\renewcommand{\theHsection}{supp.\arabic{section}}
\renewcommand{\theHtheorem}{supp.\arabic{theorem}}
\renewcommand{\theHproposition}{supp.\arabic{proposition}}
\renewcommand{\theHcorollary}{supp.\arabic{corollary}}
\renewcommand{\theHlemma}{supp.\arabic{lemma}}

\noindent
This Supplementary Information supplies additional proofs and extensions of the results stated in the main text and in Methods.
Section~\ref{sm:exact} describes the Bell-sector cell decomposition of spectrum space behind Theorem~\ref{thm:main}.
Section~\ref{sm:properties} gives the stability identities and the tensor-product properties of the NMRE, including the proof of the tensor-power dichotomy.
Section~\ref{sm:sre} proves the arbitrary-cut comparison with the nonlocal SRE, derives the linear equivalence on arbitrary cuts, and the sharper balanced-cut constants.
Section~\ref{sm:Haar} proves the Haar asymptotics of Proposition~\ref{prop:Haar}; Sec.~\ref{sm:CL} treats critical spectra. Section~\ref{sm:embezzlement} derives finite-error embezzlement bounds, evaluates the harmonic catalysts, and separates LOCC from communication-free embezzlement. Section~\ref{sm:qudits} extends the exact solution to qudits.

\section{Bell-pair sectors}
\label{sm:exact}

\begin{figure}[t]
 \centering
 \includegraphics[width=0.46\textwidth]{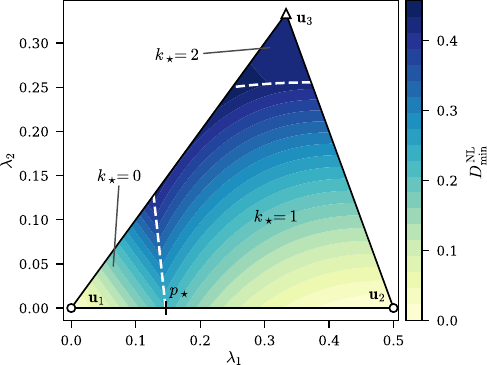}
 \caption{Exact Bell-sector diagram for rank-3 Schmidt spectra
 $\lambda_0\ge\lambda_1\ge\lambda_2$, evaluated from
 Eq.~\eqref{eq:main-formula}: heat map of $D_{\min}^{\rm NL}$,
 with the optimal-sector cells $k_\star=0,1,2$ of
 Eq.~\eqref{eq:main-formula} separated by the white walls, across
 which the measure is continuous but not analytic.  The zero set
 consists of $\bm u_1$ and $\bm u_2$ (circles); the flat rank-3
 spectrum $\bm u_3$ (triangle) has $D_{\min}^{\rm NL}=\log_2(4/3)$.
 The bottom edge is the two-qubit family $\ket{\psi_p}$ of the main text
 with the sector wall at $p_\star=\tfrac12+\tfrac{\sqrt2}{4}$ (tick).}
 \label{fig:sector-simplex}
\end{figure}

  Theorem~\ref{thm:main} is proved in Methods. Here we record how the optimal Bell-pair sector changes across spectrum space. Neighboring branches $F_k$ of Eq.~\eqref{eq:main-formula} obey
  \begin{equation}
   F_{k+1}\ge F_k
   \quad\Longleftrightarrow\quad
   \sum_{x=2^k}^{2^{k+1}-1}\mu_x\ge(\sqrt2-1)\sum_{x=0}^{2^k-1}\mu_x ,
   \label{sm:eq:branch-wall}
  \end{equation}
  i.e. the next dyadic shell must carry at least a fraction $\sqrt2-1$ of the amplitude already accumulated. For the two-qubit family $\ket{\psi_p}$ of the main text this reproduces the
  wall at $p_\star=\tfrac12+\tfrac{\sqrt2}{4}$. The branches need not increase up to $k_\star$ and decrease afterwards. For instance, for $\bm\lambda\propto(1,0.16,0.16,0.16)$ one finds $F_0>F_1<F_2$, so Eq.~\eqref{sm:eq:branch-wall} separates two sectors only where no third branch dominates. Figure~\ref{fig:sector-simplex} shows the resulting cell decomposition and
  the landscape of $D_{\min}^{\rm NL}$ for all rank-3 spectra.

\section{Stability identities and tensor products}
\label{sm:properties}

\subsection{Distance and stability identities}

For an allowed rank $r=2^k$, let $\bar\mu_r:=r^{-1}\sum_{x<r}\mu_x$ be
the mean of the leading amplitudes, so that the branch is
$F^{[r]}=r\bar\mu_r^2$. Orthogonal projection of $\bm\mu$ onto the flat unit vector $\bm u_{2^k}^{1/2}$ gives
\begin{equation}
 1-F^{[r]}
 =\sum_{x\ge r}\lambda_x+
 \sum_{x=0}^{r-1}(\mu_x-\bar\mu_r)^2,
 \label{sm:eq:tail-flatness}
\end{equation}
so that $1-F_{\rm NL}$ is the minimum over dyadic ranks of the Schmidt weight outside the support plus the nonflatness inside it. The two terms are the tail-dominated and flatness-dominated regimes discussed for the two-qubit family in the main text.
If $\cA_{\rm NL}$ denotes the union of the local-unitary orbits of all pure stabilizer states, then the pure-state trace-distance identity $\tfrac12\bigl\|\ket\psi\!\bra\psi-\ket\phi\!\bra\phi\bigr\|_1=\sqrt{1-|\braket{\psi|\phi}|^2}$, the nearest state in $\cA_{\rm NL}$ lies at trace distance 
\begin{equation}
 \min_{\phi\in\cA_{\rm NL}}\frac12
 \left\|
 \ket{\Psi}\!\bra{\Psi}-\ket{\phi}\!\bra{\phi}
 \right\|_1
 =\sqrt{1-F_{\rm NL}}.
 \label{sm:eq:orbit-rigidity}
\end{equation}

\paragraph{Spectral estimation error.}
The value of $F_{\rm NL}$ given by Eq.~\eqref{eq:main-formula} is robust to errors in the entanglement spectrum, such as those introduced by entanglement spectroscopy or by truncation in tensor-network simulations. 
Let $\hat{\bm\lambda}$ be a sorted estimate with $\|\hat{\bm\lambda}-\bm\lambda\|_1\le\varepsilon$, and let $\hat F_{\rm NL}$ be the corresponding value of the maximum in Eq.~\eqref{eq:main-formula}.  
Since $(\sqrt a-\sqrt b)^2\le|a-b|$ for $a,b\ge0$, the amplitude vectors obey $\|\hat{\bm\mu}-\bm\mu\|_2\le\sqrt{\varepsilon}$; every branch is the squared overlap of $\bm\mu$ with a fixed unit vector, so $|\sqrt{\hat F_{\rm NL}}-\sqrt{F_{\rm NL}}|\le\sqrt{\varepsilon}$ and
\begin{equation}
 |\hat F_{\rm NL}-F_{\rm NL}|\le2\sqrt{\varepsilon}.
 \label{sm:eq:spectral-robustness}
\end{equation}
Moreover, by Eq.~\eqref{eq:scale-min-entropy}, the dyadically binned spectrum, i.e., the shell weights $p_k$ alone, determines $D_{\min}^{\rm NL}$ within the additive constant $\log_2(3+2\sqrt2)$, which is asymptotically sharp (see below). Once a maximizing sector $k_\star$ is known, its exact value depends only on the leading $2^{k_\star}$ Schmidt probabilities.

\paragraph{Sharpness of the constant $3+2\sqrt2$.}
The constant in Eq.~\eqref{eq:scale-min-entropy}, and hence in Theorem~\ref{thm:octave}(i), cannot be improved. With $d=2^\nu$, assign the unnormalized weights
  \begin{equation}
   a_x=2^{-m}\quad\text{for}\quad 2^m-1\le x\le2^{m+1}-2,
   \label{sm:eq:octave-sharp-family}
  \end{equation}
with $m=0,\ldots,\nu-1$,  set $a_{d-1}=0$, and normalize. Each block $m$ consists of $2^m$ entries of weight $2^{-m}$, so every dyadic shell and every octave $[\ell,2\ell]$ carries unnormalized mass at most one, and $P_{\rm sh}=\eta_{\rm emb}=1/\nu$. The rank-$d$ branch gives
\begin{equation}
   \frac{F^{[d]}}{\eta_{\rm emb}}=\frac1d\Bigl(\sum_{m=0}^{\nu-1}2^{m/2}\Bigr)^2
   =\frac{(2^{\nu/2}-1)^2}{d\,(\sqrt2-1)^2}\longrightarrow3+2\sqrt2 .
   \label{sm:eq:octave-sharp-limit}
\end{equation}
Since $F^{[d]}\le F_{\rm NL}\le(3+2\sqrt2)P_{\rm sh}=(3+2\sqrt2)\eta_{\rm emb}$, both ratios $F_{\rm NL}/P_{\rm sh}$ and $F_{\rm NL}/\eta_{\rm emb}$ tend to $3+2\sqrt2$ as $\nu\to\infty$.

\subsection{Tensor-power flatness dichotomy}
\label{sm:ssec:tensor-dichotomy}
Let $\ket\psi=\ket{\psi_{\bm p}}:=\sum_{i=1}^r\sqrt{p_i}\ket{i}_A\ket{i}_B$ have positive Schmidt probabilities $\bm p=(p_1,\ldots,p_r)$, entanglement entropy $H:=-\sum_ip_i\ln p_i$ and capacity of entanglement $V:=\sum_ip_i(\ln p_i)^2-H^2$, the variance of $-\ln p_i$ when $i$ is drawn with probability $p_i$; $V=0$ exactly when all $p_i$ are equal. All copies on side $A$ are regrouped against all copies on side $B$, so the local-unitary optimization may act collectively on each party.

\begin{theorem}[Tensor-power flatness dichotomy]
\label{sm:thm:tensor-dichotomy}
If $V>0$, then
\begin{equation}
 \begin{aligned}
 F_{\rm NL}(\psi^{\otimes m})&=\Theta_{\psi}(m^{-1/2}),\\
 D_{\min}^{\rm NL}(\psi^{\otimes m})
 &=\frac12\log_2m+O_{\psi}(1).
 \end{aligned}
 \label{sm:eq:tensor-nonflat-law}
\end{equation}
If $V=0$, the positive spectrum is flat at some rank $r$, and exactly
\begin{equation}
 \begin{aligned}
 F_{\rm NL}(\psi^{\otimes m})
 &=2^{-\operatorname{dist}(m\log_2r,\mathbb Z)},\\
 D_{\min}^{\rm NL}(\psi^{\otimes m})
 &=\operatorname{dist}(m\log_2r,\mathbb Z)\le\frac12.
 \end{aligned}
 \label{sm:eq:tensor-flat-law}
\end{equation}
Here $\operatorname{dist}(x,\mathbb Z)$ is the distance to the nearest
integer.  Dyadic $r$ gives $D_{\min}^{\rm NL}=0$ for every $m$; otherwise the bounded
value generally oscillates with $m$.
Consequently, the tensor powers of $\psi$ form a universal LOCC-embezzling
family if and only if its positive Schmidt spectrum is nonflat.
\end{theorem}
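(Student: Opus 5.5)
\emph{Flat case ($V=0$).} This case reduces to evaluating Eq.~\eqref{eq:main-formula} directly. The Schmidt spectrum of $\psi^{\otimes m}$ is flat at rank $R=r^m$, i.e.\ $\mu_x=R^{-1/2}$ for $x<R$ and $0$ otherwise. A branch with $2^k\le R$ gives $F_k=2^k/R$, and a branch with $2^k\ge R$ gives $F_k=R/2^k$. The registers have $\nu\ge\lceil\log_2R\rceil$, so both dyadic neighbours $2^{\lfloor\log_2R\rfloor}$ and $2^{\lceil\log_2R\rceil}$ are allowed; the embedding-independence remark in the proof of Theorem~\ref{thm:octave} shows that padding does not matter. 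The maximum is therefore $2^{-\operatorname{dist}(\log_2R,\mathbb Z)}$, which is Eq.~\eqref{sm:eq:tensor-flat-law}. The bound $\le\tfrac12$ is immediate. Since $D_{\min}^{\rm NL}$ stays bounded, Theorem~\ref{thm:octave}(ii) shows that these tensor powers are not universal embezzlers.

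\emph{Nonflat case ($V>0$), strategy.} Here I would not use the branches directly. Instead I would use the two-sided comparison already proved: $P_{\rm sh}\le F_{\rm NL}$ and $F_{\rm NL}\le(3+2\sqrt2)P_{\rm sh}\le(3+2\sqrt2)\eta_{\rm emb}$ from Eq.~\eqref{eq:scale-min-entropy} and Theorem~\ref{thm:octave}(i). It then suffices to show that $P_{\rm sh}\ge c\,m^{-1/2}$ and $\eta_{\rm emb}\le C\,m^{-1/2}$. The probabilistic input is the following. If the eigenvalue $\lambda$ of $\psi^{\otimes m}$ is drawn with its own weight, then its depth $-\ln\lambda$ is a sum of $m$ i.i.d.\ copies of $-\ln p_i$, with mean $mH$ and variance $mV>0$. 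The Berry--Esseen theorem, which also holds in the lattice case, gives two estimates. First, the weight carried by depths in any interval of length $L$ is at most $C(L+1)/\sqrt m$. Second, the window $|{-\ln\lambda}-mH|\le a\sqrt m$ carries weight at least $1/2$ for a suitable constant $a$. Moreover, any interval of length $\ge h_0$ (a constant exceeding the lattice span) inside the window $mH\pm 3a\sqrt m$ carries weight at least $c/\sqrt m$.

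\emph{Lower bound on $P_{\rm sh}$.} The first step is to translate depths into ranks. Sortedness gives $x+1\le1/\lambda_x$, so typical eigenvalues have $\ln(x+1)\le mH+a\sqrt m$. Conversely, every eigenvalue with depth $\ge mH-a\sqrt m$ has rank at least the number of eigenvalues with depth in $[mH-3a\sqrt m,\,mH-3a\sqrt m+h_0]$. By the previous paragraph this number is at least $(c/\sqrt m)\,e^{mH-3a\sqrt m}$. Hence the typical weight $\ge1/2$ lies in ranks whose logarithm ranges over a window of width $O(\sqrt m)$, which meets only $O(\sqrt m)$ dyadic shells. Pigeonhole then gives $P_{\rm sh}\ge c'/\sqrt m$, and so $D_{\min}^{\rm NL}\le\tfrac12\log_2m+O(1)$.

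\emph{Upper bound on $\eta_{\rm emb}$ (main obstacle).} Fix an octave $[\ell,2\ell]$. All its eigenvalues satisfy $\lambda\le1/(\ell+1)$, so their depths are at least $\ln(\ell+1)$. Split these depths into unit intervals $t_j=\ln(\ell+1)+j$. The octave weight in interval $j$ is at most $\min\{C/\sqrt m,\,e^{-j}\}$, because the interval carries weight at most $C/\sqrt m$ and the octave contains at most $\ell+1$ entries, each of size at most $e^{-t_j}$. Summing over $j$ naively gives $O(\log m/\sqrt m)$, which loses a logarithm; this is the step I expect to be hardest.

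To remove the logarithm, I would show that the depths appearing in a given octave are confined to an $O(1)$ window. The key estimate would be the cumulative count $N(t):=\#\{\lambda: -\ln\lambda\le t\}$. Summing Berry--Esseen interval bounds weighted by $e^{s}$ gives $N(t)\asymp e^{t}\,\rho_m(t)$, where $\rho_m(t)\lesssim m^{-1/2}$ is the local weight density near depth $t$. This rank-to-depth map is monotone with derivative of order one in $\ln N$. Consequently, ranks in $[\ell,2\ell]$ correspond to depths in an interval of length $O(1)$, up to entries whose contribution decays geometrically. The octave weight is then $O(1)\cdot C/\sqrt m$.

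If this local control proves delicate, for example because $\rho_m$ is not bounded below in the Gaussian tails, I would handle the tails separately. Depths farther than $A\sqrt{m\log m}$ from $mH$ carry total weight $o(m^{-1/2})$ by a Chernoff bound, and inside that region the local CLT gives two-sided density bounds. Combining the two bounds with Theorem~\ref{thm:octave} yields $F_{\rm NL}=\Theta(m^{-1/2})$ and Eq.~\eqref{sm:eq:tensor-nonflat-law}. Theorem~\ref{thm:octave}(ii) then gives the final embezzlement dichotomy.
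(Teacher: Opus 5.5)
Your flat case and your lower bound on $P_{\rm sh}$ are sound. The lower bound differs only slightly from the paper's. You lower-bound the rank of a typical eigenvalue by counting eigenvalues in a short depth window near $mH-3a\sqrt m$; this needs a local limit theorem, or Berry--Esseen with $h_0$ a sufficiently large constant rather than just above the lattice span. The paper gets the lower rank tail from Chebyshev plus the remark that at most $e^{\mu_m-t}$ labels, each of probability below $e^{-\mu_m+t/2}$, carry weight at most $e^{-t/2}$. Either works.

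The gap is in the upper bound on $\eta_{\rm emb}$. Confining the depths of an octave to an $O(1)$ window requires a \emph{lower} bound on $d\ln N/dt$, hence a lower bound on the local weight density. Near the top of the spectrum this is false: $N$ saturates at $r^m$, and a single octave spans depths of order $m$. Your repair does not close the gap either. The local CLT controls the density only up to an additive $o(m^{-1/2})$, which swamps the Gaussian density $\sim m^{-1/2-A^2/(2V)}$ at distance $A\sqrt{m\log m}$ from $mH$. So you have no two-sided density bounds for $K\sqrt m\le|t-mH|\le A\sqrt{m\log m}$, a region that carries weight of order one for fixed $K$. Closing it this way would need a Cramér-type moderate-deviation local limit theorem.

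None of this is needed. Let $\tau:=-\ln\lambda_\ell$, and let $S_m$ be the depth of an eigenvalue drawn with its own weight. Sortedness, together with the fact that labels $0,\dots,\ell$ all have depth at most $\tau$, gives
\[
 \sum_{x=\ell}^{2\ell}\lambda_x\le(\ell+1)\lambda_\ell\le e^{-\tau}N(\tau)
 =\mathbb E\bigl[e^{S_m-\tau}\bm 1_{\{S_m\le\tau\}}\bigr]
 \le\sum_{j\ge0}e^{-j}\,\frac{2C}{\sqrt m}=O(m^{-1/2}).
\]
The last step splits $(-\infty,\tau]$ into unit intervals and uses your own bound $C(L+1)/\sqrt m$. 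This is just the one-sided half of your ``$N(t)\asymp e^t\rho_m(t)$''. It is uniform in $\ell$, and it is the paper's argument (there with Kolmogorov--Rogozin and dyadic shells, $\sum_{x=a}^{2a-1}\lambda_x\le a\lambda_a$).

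Your naive estimate lost the logarithm because it bounded the octave's own entries by $\min\{C/\sqrt m,e^{-j}\}$, so the two factors competed. Weighting the labels \emph{preceding} the octave makes them multiply, and the sum converges.
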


\begin{proof}
Let $I_1,\ldots,I_m$ be independent indices drawn from $\bm p$ and set $S_m:=-\sum_{j=1}^m\ln p_{I_j}$, so that $e^{-S_m}$ is a product Schmidt probability drawn with its own weight. $S_m$ has mean $mH$ and variance $mV$. Suppose first that $V>0$. Choose $h>0$ smaller than the minimum nonzero spacing between distinct values of $-\ln p_i$, and define the concentration  function
  \begin{equation}
   Q_m(h):=\sup_{\tau}\Pr\{\tau-h<S_m\le \tau\}.
   \label{sm:eq:tensor-concentration-function}
  \end{equation}
  The Kolmogorov--Rogozin inequality~\cite{Kesten1969} gives $Q_m(h)=O_\psi(m^{-1/2})$, because each summand $-\ln p_{I_j}$ has a nondegenerate distribution when $V>0$.

  Let $\lambda_0^{(m)}\ge\lambda_1^{(m)}\ge\cdots$ be the ordered product spectrum and put
  \begin{equation}
   \mathcal N_m(\tau):=\#\{x:-\ln\lambda_x^{(m)}\le \tau\}.
   \label{sm:eq:tensor-counting-function}
  \end{equation}
  The exact change of measure
  \begin{equation}
   e^{-\tau}\mathcal N_m(\tau)
   =\mathbb E\!\left[e^{S_m-\tau}\bm 1_{\{S_m\le \tau\}}\right]
   \le\frac{Q_m(h)}{1-e^{-h}}
   \label{sm:eq:tensor-counting-bound}
  \end{equation}
  follows by partitioning $(-\infty,\tau]$ into intervals of width $h$. For a nonzero dyadic shell with $a$ entries beginning at label $a$, take $\tau=-\ln\lambda_{a}^{(m)}$. Sortedness
  and $a+1\le\mathcal N_m(\tau)$ then give
  \begin{equation}
   \sum_{x=a}^{2a-1}\lambda_x^{(m)}
   \le a\lambda_{a}^{(m)}
   =O_\psi(m^{-1/2}).
   \label{sm:eq:tensor-shell-upper}
  \end{equation}
  The rank-one shell is exponentially smaller, so the largest dyadic-shell weight $P_{\rm sh}^{(m)}$ is $O_\psi(m^{-1/2})$.

For the matching lower bound, let $R_m$ be the random rank obtained by sampling the ordered spectrum, so $\Pr(R_m=x)=\lambda_x^{(m)}$ and $S_m=-\ln\lambda_{R_m}^{(m)}$. Sortedness implies $(R_m+1)\lambda_{R_m}^{(m)}\le1$, hence $\ln(R_m+1)\le S_m$. With $\mu_m=mH$ and $t=A\sqrt m$, Chebyshev's inequality gives
   \begin{align}
   \Pr[\ln(R_m+1)\ge\mu_m+t]&\le\frac{V}{A^2},
   \label{sm:eq:tensor-rank-upper-tail}\\
   \Pr[\ln(R_m+1)\le\mu_m-t]&\le\frac{4V}{A^2}+e^{-t/2}.
   \label{sm:eq:tensor-rank-lower-tail}
  \end{align}
The second bound follows by splitting the event according to whether $S_m\le\mu_m-t/2$, which Chebyshev's inequality bounds by $4V/A^2$.
On the complementary event, the rank lies among at most  $e^{\mu_m-t}$ labels whose probabilities are each smaller than $e^{-\mu_m+t/2}$, contributing at most $e^{-t/2}$. Choosing a fixed sufficiently large $A$ therefore places, for all sufficiently large  $m$, a positive, $m$-independent probability in $|\ln(R_m+1)-mH|\le A\sqrt m$.
This interval meets only $O_\psi(\sqrt m)$
  dyadic shells, so one of them carries $\Omega_\psi(m^{-1/2})$ probability. Therefore $P_{\rm sh}^{(m)}=\Theta_\psi(m^{-1/2})$, and Eq.~\eqref{eq:scale-min-entropy} proves
  Eq.~\eqref{sm:eq:tensor-nonflat-law}.

Finally, $V=0$ holds exactly when all positive $p_i$ are equal. The $m$-copy spectrum is then flat at rank $R=r^m$. A dyadic branch of rank $q$ equals $q/R$ for $q\le R$ and $R/q$ for $q\ge R$. Maximizing over the powers of two immediately gives Eq.~\eqref{sm:eq:tensor-flat-law}. The statement follows from Eq.~\eqref{eq:embezzlement-iff-main}.
  \end{proof}

\subsection{Tensor identities and the LOCC caveat}

If $\ket{\phi_{\rm free}}$ has a flat Schmidt spectrum of rank $s=2^\ell$, its product with $\ket\Psi$ repeats each Schmidt amplitude $s$ times with a factor $s^{-1/2}$: for $q<s$ the branch of rank $q$ is no larger than the rank-one branch of $\ket\Psi$, for $s\le q\le s2^\nu$ writing $q=sr$ reproduces the branch of rank $r$ of $\ket\Psi$, and above the support  increasing $q$ only lowers the saturated branch. Hence
  \begin{equation}
   F_{\rm NL}(\ket\Psi\otimes\ket{\phi_{\rm free}})=F_{\rm NL}(\ket\Psi).
   \label{sm:eq:free-factor}
  \end{equation}
If $\ket{s}$ and $\ket{s'}$ are the stabilizer states closest to $\ket\Psi$ and $\ket\Phi$ in their optimal local frames, then $\ket{s}\otimes\ket{s'}$ is a stabilizer state in a local frame of $\ket\Psi\otimes\ket\Phi$, so
  \begin{equation}
   F_{\rm NL}(\ket\Psi\otimes\ket\Phi)\ge F_{\rm NL}(\ket\Psi)\,F_{\rm NL}(\ket\Phi),
   \label{sm:eq:supermultiplicativity}
  \end{equation}
i.e. the NMRE is subadditive, $D_{\min}^{\rm NL}(\ket\Psi\otimes\ket\Phi)\le D_{\min}^{\rm NL}(\ket\Psi)+D_{\min}^{\rm NL}(\ket\Phi)$, and the inequality can be strict: for the flat  rank-three spectrum $\bm u_3$ one has $D_{\min}^{\rm NL}(\bm u_3)=\log_2\tfrac43$, whereas two copies have the flat rank-nine spectrum $\bm u_9$ and $D_{\min}^{\rm NL}(\bm  u_9)=\log_2\tfrac98<2\log_2\tfrac43$.
The bound $D_{\min}^{\rm NL}\le\log_2(\nu+1)$, applied to $\ket\Psi^{\otimes t}$ with $t\nu$ qubits on each side, gives
  \begin{equation}
   0\le\frac1tD_{\min}^{\rm NL}(\ket\Psi^{\otimes t})
   \le\frac{\log_2(t\nu+1)}{t}\longrightarrow0.
   \label{sm:eq:zero-density}
\end{equation}
Finally, the deterministic LOCC conversion of a Bell pair into $\ket{\psi_p}=\sqrt p\,\ket{00}+\sqrt{1-p}\,\ket{11}$ invoked in the main text is realized by a local measurement with Kraus operators $\operatorname{diag}(\sqrt p,\sqrt{1-p})$ and $\operatorname{diag}(\sqrt{1-p},\sqrt p)$, followed by $X_A\otimes X_B$ in the second outcome; since $F_{\rm NL}(\ket{\psi_p})=\max\{p,\tfrac12+\sqrt{p(1-p)}\}<1$ for $\tfrac12<p<1$, the NMRE is not monotone under LOCC.

\section{Comparison with nonlocal SRE}
\label{sm:sre}

\subsection{Arbitrary cuts}

We denote the nonlocal stabilizer purity as $P_{2,{\rm NL}}(\ket\Psi):=\max_{U_A,U_B}Q_N\bigl((U_A\otimes U_B)\ket\Psi\bigr)=2^{-\cM_{\rm NL}^{\rm SRE}(\ket\Psi)}$. Let $\ket{s}\in\cC_N\ket{0^N}$ be a stabilizer state with signed stabilizer group $\mathsf S_s$, whose $2^N$ elements have $\ket{s}$ as a common $+1$ eigenvector. The projector identity and convexity give
  \begin{align}
   f_s:=|\braket{s|\psi}|^2
   &=2^{-N}\sum_{Q\in\mathsf S_s}\langle Q\rangle_\psi,
   \notag\\
   f_s^4
   &\le2^{-N}\sum_{Q\in\mathsf S_s}\langle Q\rangle_\psi^4
   \le Q_N(\ket\psi).
   \label{sm:eq:fid-fourth}
\end{align}
Maximizing over $\ket s$ proves $F_{\rm stab}(\ket\psi)^4\le Q_N(\ket\psi)$~\cite{Haug2023Monotones}, and the testing inequality of Ref.~\cite{Haug2024Algorithms} is $F_{\rm stab}(\ket\psi)\ge2Q_N(\ket\psi)-1$. Evaluating the first inequality in a fidelity-optimal local frame and maximizing the second over local frames gives
  \begin{equation}
   F_{\rm NL}^4\le P_{2,{\rm NL}}\le\frac{1+F_{\rm NL}}2,
   \label{sm:eq:purity-bracket}
  \end{equation}
and taking negative logarithms proves the upper bound and the saturating lower bound in Eq.~\eqref{eq:SRE-fidelity-main}.
The coefficient four is sharp: for the two-qubit family $\ket{\psi_p}$, local unitaries cannot increase the Pauli fourth moment beyond its value in the Schmidt frame, $P_{2,{\rm NL}}(\ket{\psi_p})=1-t^2+t^4$ with $t=2\sqrt{p(1-p)}$, so that $\cM_{\rm NL}^{\rm SRE}/D_{\min}^{\rm NL}\to4$ both as $p\to1$ (where $\cM\simeq4\epsilon/\ln2$, $D\simeq\epsilon/\ln2$ with $\epsilon=1-p$) and as $p\to\tfrac12$ (where $\cM\simeq4\delta^2/\ln2$, $D\simeq\delta^2/\ln2$ with $\delta=p-\tfrac12$).

\paragraph{Linear lower bound on every cut.}
The inverse theorem for the Gowers-3 norm of quantum states~\cite{ArunachalamBravyiDutt2024} states that $F_{\rm stab}(\ket\psi)\ge Q_N(\ket\psi)^{C_\star}$ for a universal constant $C_\star>1$, since ${\rm Gowers}(\ket\psi,3)^8=Q_N(\ket\psi)$ (Lemma 3.3 of Ref.~\cite{ArunachalamDutt2024}); equivalently, $D_{\min}\le C_\star\bigl(-\log_2Q_N\bigr)$ in every frame~\cite{Tarabunga2025BMSA}. A closely related inverse bound, for the Bell-difference moment $\gamma\in[Q_N^2,Q_N]$ and with an explicit exponent, is proved independently in Ref.~\cite{BaoDordrechtHelsen2025}: $F_{\rm stab}\ge c\,\gamma^{K}$ with a universal prefactor $c$ and $K=112$ in the arXiv version, i.e. $D_{\min}\le2K\bigl(-\log_2Q_N\bigr)+O(1)$. Evaluating the multiplicative form in the frame $U_A^\star\otimes U_B^\star$ that maximizes $Q_N$ gives
\begin{equation}
 D_{\min}^{\rm NL}\le D_{\min}\bigl((U_A^\star\otimes U_B^\star)\ket\Psi\bigr)\le C_\star\,\cM_{\rm NL}^{\rm SRE},
 \label{sm:eq:gowers-linear}
\end{equation}
which proves the linear lower bound in Eq.~\eqref{eq:SRE-fidelity-main}.

\subsection{Balanced cuts}

For a balanced cut $n_A=n_B=n$, we define the shell-max weights $\widehat w_0:=\lambda_0$ and $\widehat w_k:=2^{k-1}\lambda_{2^{k-1}}$ for $1\le k\le n$, and $Q_{\rm sh}:=\sum_{k=0}^n\widehat w_k^4$. 
On balanced cuts one obtains the explicit bound $\cM_{\rm NL}^{\rm SRE}\ge\tfrac15D_{\min}^{\rm NL}$, which sharpens the linear bound in Eq.~\eqref{eq:SRE-fidelity-main} since $C_\star\ge5$. The key input is the following result of the companion work~\cite{Sierant2026CBOptimality}, which we restate in the present notation.

\begin{theorem}
\label{sm:thm:shell-sandwich}
For a balanced cut $n_A=n_B$, with $P_{2,{\rm CB}}$ the stabilizer purity of the computational basis representative $\ket{\Psi_{\mathrm{CB}}}$,
\begin{equation}
 \begin{aligned}
 \frac1{16}Q_{\rm sh}
 &\le P_{2,{\rm CB}}
 \le P_{2,{\rm NL}}
 \le C_{\rm sh}Q_{\rm sh},\\
 C_{\rm sh}&=198+576(1+\sqrt2),
 \end{aligned}
 \label{sm:eq:shell-purity-lemma}
\end{equation}
uniformly over the local-unitary orbit.
\end{theorem}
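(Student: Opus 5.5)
The plan is to prove the three inequalities of Theorem~\ref{sm:thm:shell-sandwich} separately. The middle one is immediate: $\ket{\Psi_{\rm CB}}$ lies in the local-unitary orbit of $\ket\Psi$ (Methods, proof of Theorem~\ref{thm:main}), so $P_{2,{\rm CB}}\le P_{2,{\rm NL}}$. The lower bound will be a positivity argument special to the computational-basis frame. The upper bound must hold uniformly over the orbit and will use an $\ell^4$ Minkowski decomposition of the Pauli spectrum into dyadic shells.

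\emph{Lower bound.} Take a balanced cut with $n_A=n_B=n$ and $\ket{\Psi_{\rm CB}}=\sum_x\mu_x\ket{x}_A\ket{x}_B$ with $\mu_x\ge0$. A Pauli string $X^aZ^b\otimes X^{a'}Z^{b'}$ has nonzero expectation only if $a=a'$. In that case the expectation is, up to a sign, $\widehat g_a(c)=\sum_x g_a(x)(-1)^{c\cdot x}$, where $g_a(x):=\mu_x\mu_{x\oplus a}\ge0$ and $c$ is a linear combination of $b,b'$.

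Summing over $b,b'$ and applying Parseval to the fourth power should give
\begin{equation*}
 P_{2,{\rm CB}}=\sum_a E(g_a),\qquad E(g):=\sum_{x\oplus y\oplus z\oplus w=0}g(x)g(y)g(z)g(w),
\end{equation*}
which is a sum of nonnegative terms because $g_a\ge0$. The normalization of this identity has to be checked first.

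Next, I restrict the sum. The initial segment $V_k=\{0,\dots,2^k-1\}$ is a subgroup of $\mathbb F_2^n$. Keep only those $a$ whose highest set bit is bit $k-1$, together with quadruples lying in $V_k$. These index sets are disjoint for different $k$. On each of them $g_a\ge\lambda_{2^k-1}\ge\lambda_{2^k}$, and counting the $2^{k-1}$ choices of $a$ and the roughly $2^{3k}$ quadruples gives a contribution of order $2^{4k}\lambda_{2^k}^4/2=8\,\widehat w_{k+1}^4$ up to constants. A separate count for $a=0$ handles $\widehat w_0$. Summing over $k$ should yield $Q_{\rm sh}/16$; the index bookkeeping between $\widehat w_{k+1}$ and the shell $D_{k+1}$ is the routine part that fixes the constant.

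\emph{Upper bound.} Fix an arbitrary frame, absorbed into $\ket\Psi$, and split $\ket\Psi=\sum_{j=0}^n\ket{\Psi_j}$, where $\ket{\Psi_j}$ collects the Schmidt terms with labels in the shell $D_j$. Then $\langle P\rangle_\Psi=\sum_{j,j'}\langle\Psi_j|P|\Psi_{j'}\rangle$, and Minkowski's inequality in $\ell^4$ over $P\in\cP_N$ gives
\begin{equation*}
 (2^NP_2)^{1/4}\le\sum_{j,j'}\bigl\|\langle\Psi_j|P|\Psi_{j'}\rangle\bigr\|_{\ell^4_P}.
\end{equation*}
For each pair $(j,j')$ I interpolate $\|\cdot\|_4\le\|\cdot\|_2^{1/2}\|\cdot\|_\infty^{1/2}$. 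The $\ell^2$ norm is exact: $\|\cdot\|_2^2=2^N\|\Psi_j\|^2\|\Psi_{j'}\|^2\le 2^N p_jp_{j'}$. For the $\ell^\infty$ norm, the von Neumann trace inequality (Lemma~\ref{lem:Schmidt-overlap}) applied to $\Tr(X_j^\dagger P_AX_{j'}P_B^T)$ bounds it by $\sum_i s_i(X_j)s_i(X_{j'})$, at most $\min(2^{j},2^{j'})^{\cdot}\,\mu_{\max,j}\mu_{\max,j'}$ with ranks $\lesssim2^{j-1}$. In terms of the shell-max weights this becomes $\lesssim\min(2^j,2^{j'})\sqrt{\widehat w_j\widehat w_{j'}}/2^{(j+j')/2}$, which decays like $2^{-|j-j'|/2}$.

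Combining these, with $p_j\le\widehat w_j$ by sortedness, each pair contributes at most $2^{N/4}\,2^{-|j-j'|/4}(\widehat w_j\widehat w_{j'})^{1/2}$ up to constants. The remaining step is a Schur-test or Young's-inequality bound for the kernel $2^{-|j-j'|/4}$ acting on the sequence $(\widehat w_j)$ in the appropriate $\ell^p$ norms, which converts the double sum into $C\,Q_{\rm sh}^{1/4}$. The geometric series in $|j-j'|$ is where factors such as $1+\sqrt2$ enter $C_{\rm sh}$.

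\emph{Main obstacle.} I expect this last step to be the hard one. The naive Minkowski plus interpolation estimate loses too much for diagonal and near-diagonal pairs: it controls $\ell^4$ by $\ell^2$ and $\ell^\infty$ rather than by fourth powers. If the kernel estimate comes out as $\bigl(\sum_j\widehat w_j^{2}\bigr)^{2}$ instead of $\sum_j\widehat w_j^4$, I would refine the diagonal terms. The first thing to try is to bound $\sum_P\langle\Psi_j|P|\Psi_j\rangle^4$ directly by $O(2^N\widehat w_j^4)$, using the fact that a near-flat block of rank $2^{j-1}$ has Pauli fourth moment comparable to that of $j-1$ Bell pairs, and then handle the off-diagonal pairs with the decaying kernel above.
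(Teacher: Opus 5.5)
Your middle inequality is immediate and your lower bound is sound. The paper does not prove this theorem itself; it imports it from the companion work, so I judge your argument on its own merits. The identity $P_{2,{\rm CB}}=\sum_aE(g_a)$ holds with exactly that normalization, since $(b,b')\mapsto b\oplus b'$ is $2^n$-to-one and Parseval supplies the other factor $2^n$. Your restriction gives $2^{4k-1}\lambda_{2^k-1}^4\ge\tfrac12\widehat w_{k+1}^4$ for $1\le k\le n-1$, not $8\widehat w_{k+1}^4$. The $a=0$ term satisfies $E(g_0)\ge\lambda_0^4+7\lambda_1^4$ and covers $\widehat w_0$ and $\widehat w_1$. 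So in fact $P_{2,{\rm CB}}\ge\tfrac12 Q_{\rm sh}$.

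The upper bound has a genuine gap. It is not the near-diagonal loss you anticipate but the Minkowski step itself. Applying the triangle inequality in $\ell^4_P$ across shell pairs before taking the fourth power produces an $\ell^1$-in-shells quantity. The diagonal of your double sum alone is $\sum_j\widehat w_j$, and no Schur or Young estimate bounds this by $C\,Q_{\rm sh}^{1/4}=C\|\widehat{\bm w}\|_4$.

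Concretely, take the sorted spectrum with flat shells of equal weight, $\lambda_x=1/[(n+1)|D_j|]$ for $x\in D_j$. Then $\widehat w_j=p_j=1/(n+1)$ for every $j$ and $Q_{\rm sh}=(n+1)^{-3}$. In the computational-basis frame each normalized $\ket{\Psi_j}$ is a stabilizer state, because $D_j=2^{j-1}\oplus V_{j-1}$ is a coset. Hence $\|\langle\Psi_j|P|\Psi_j\rangle\|_{\ell^4_P}=2^{N/4}p_j$ exactly. The right-hand side of your Minkowski inequality is therefore at least $2^{N/4}\sum_jp_j=2^{N/4}$. This yields only the trivial $P_2\le1$, whereas the theorem asserts $P_2\le C_{\rm sh}(n+1)^{-3}$.

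Refining the diagonal terms cannot repair this. The bound $\|\langle\Psi_j|P|\Psi_j\rangle\|_{\ell^4_P}^4\le2^N\widehat w_j^4$ that you propose to prove already follows from your interpolation, and it is attained in this example. The structural problem is that per-pair data, namely the $\ell^2$ and $\ell^\infty$ norms of each $v^{jj'}_P=\langle\Psi_j|P|\Psi_{j'}\rangle$, cannot distinguish the actual vectors from positively proportional ones with the same norms, for which Minkowski is an equality. What makes $P_2$ small is that different shells have nearly disjoint Pauli spectra, and this must be proved in every local frame.

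The missing ingredient is a joint estimate at the fourth-moment level. Expand $\langle P\rangle^4$ over shell indices, perform the Pauli sum first, and prove cross-shell decay such as $\sum_P|v^{jj}_P|^2|v^{kk}_P|^2\lesssim2^N2^{-|j-k|}\widehat w_j^2\widehat w_k^2$, together with analogues involving the off-diagonal blocks. Such decay follows from the $A|B$ product structure of Pauli strings, which your global von Neumann bound ignores. Write $X_j$ for the coefficient matrix of $\ket{\Psi_j}$ and $\rho_{A,j}=X_jX_j^\dagger$, and take $j<k$.
\begin{itemize}
\item For the larger shell, sum over $P_B$ in $\ell^2$: $\sum_{P_B}|\Tr(MP_B^T)|^2=d_B\|M\|_2^2$ with $M=X_k^\dagger P_AX_k$ gives a factor $\Tr\rho_{A,k}^2\le\widehat w_k^2/2^{k-1}$.
\item For the smaller shell, take the maximum over $P_B$ and then sum over $P_A$: using $|\Tr(MP_B^T)|\le\|M\|_1$ and the rank $2^{j-1}$ of $X_j$ gives $2^{j-1}d_Ap_j^2$.
\end{itemize}
Only once such decay is established does a Schur bound with a geometric kernel deliver $P_2\le C\,Q_{\rm sh}$. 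That kernel must act on fourth-power quantities like $\widehat w_j^2\widehat w_k^2$, not on $\sqrt{\widehat w_j\widehat w_k}$.
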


Two consequences of the ordering $\lambda_0\ge\lambda_1\ge\cdots$ connect the shell-max weights to the nonlocal stabilizer fidelity.  First, $F_{\rm NL}\ge\widehat w_k$ for every $k$: the $k=0$ case is the rank-one branch, and for $k\ge1$ the rank-$2^{k-1}$ branch obeys
\begin{equation}
 F_{\rm NL}\ge
 2^{-(k-1)}\Bigl(\sum_{x=0}^{2^{k-1}-1}\mu_x\Bigr)^2
 \ge2^{k-1}\lambda_{2^{k-1}}=\widehat w_k .
 \label{sm:eq:fid-dominates-shell}
\end{equation}
Second, $\sum_k\widehat w_k\le2$, because $\widehat w_1\le\lambda_0$ and,
for $k\ge2$, the shell $\{2^{k-2},\ldots,2^{k-1}-1\}$ carries
probability at least $\widehat w_k/2$.  Hence
$P_{2,{\rm NL}}\le C_{\rm sh}(\max_k\widehat w_k)^3\sum_k\widehat w_k
\le2C_{\rm sh}F_{\rm NL}^3$, that is, with
$x=D_{\min}^{\rm NL}$ and $y=\cM_{\rm NL}^{\rm SRE}$,
\begin{equation}
 y\ge3x-\log_2(2C_{\rm sh}).
 \label{sm:eq:SRE-affine}
\end{equation}
The case $x=0$ is immediate. For $0<x\le9/2$ the saturating lower bound of
Eq.~\eqref{eq:SRE-fidelity-main} gives
$y/x\ge-\log_2[(1+2^{-9/2})/2]/(9/2)>1/5$, the ratio being
nonincreasing by concavity, and for $x\ge9/2$
Eq.~\eqref{sm:eq:SRE-affine} gives
$y/x\ge3-\log_2(2C_{\rm sh})/x>1/5$.  This proves the balanced-cut sharpening of the linear bound in Eq.~\eqref{eq:SRE-fidelity-main},
\begin{equation}
 \tfrac15\,D_{\min}^{\rm NL}\le\cM_{\rm NL}^{\rm SRE}\le4\,D_{\min}^{\rm NL}
 \qquad(\text{balanced cuts}).
 \label{sm:eq:SRE-balanced}
\end{equation}

 For the $m$-copy spectrum in the $V>0$ case of
  Theorem~\ref{sm:thm:tensor-dichotomy},
Eq.~\eqref{sm:eq:tensor-counting-bound} implies
$\max_k\widehat w_k=O_\psi(m^{-1/2})$, while the rank window of
Eqs.~\eqref{sm:eq:tensor-rank-upper-tail}
and~\eqref{sm:eq:tensor-rank-lower-tail} meets $O_\psi(\sqrt m)$ shells
and carries a fixed total probability.  Since the shell-max weights
dominate the true shell weights, the power-mean inequality gives
$Q_{\rm sh}=\Theta_\psi(m^{-3/2})$, and Theorem~\ref{sm:thm:shell-sandwich}
yields, for every fixed nonflat spectrum on a balanced cut,
\begin{equation}
 \begin{aligned}
 \cM_{\rm NL}^{\rm SRE}(\psi^{\otimes m})
 &=\frac32\log_2m+O_\psi(1)\\
 &=3D_{\min}^{\rm NL}(\psi^{\otimes m})+O_\psi(1).
 \end{aligned}
 \label{sm:eq:tensor-SRE-law}
\end{equation}

\section{Haar-random entanglement spectra}
\label{sm:Haar}

We use the notation of Methods: $\Psi_{d,D}$ is Haar random on $\mathbb C^d\otimes\mathbb C^D$ with $d=2^\nu\le D$ and $d/D\to c\in(0,1]$, and $\xi_x=d\lambda_x$ are the ordered rescaled eigenvalues of the reduced state. Their distribution converges in probability to the Marchenko--Pastur law
\begin{align}
   \rho_c(x)&=\frac{\sqrt{(b_c-x)(x-a_c)}}{2\pi cx}\,\boldsymbol 1_{[a_c,b_c]}(x),
   \label{sm:eq:MP-density}\\
   a_c&=(1-\sqrt c)^2,\qquad b_c=(1+\sqrt c)^2,\notag
  \end{align}
which has mean $1$ and variance $c$, and $\xi_0\to b_c$~\cite{BaiYin1993,Nadal2011}.
Let $t_c(\alpha)$ be the threshold above which the law $\rho_c$ carries a fraction $\alpha$ of its weight, $\int_{t_c(\alpha)}^{b_c}\rho_c(x)\,dx=\alpha$. The quantities of Methods are then $I_c(\alpha)=\int_{t_c(\alpha)}^{b_c}\sqrt{x}\,\rho_c(x)\,dx$, the contribution of the largest fraction $\alpha$ of the eigenvalues to the mean of $\sqrt x$, and $\Phi_c(\alpha)=I_c(\alpha)^2/\alpha$.

\paragraph{Branch limits.}
For fixed $\alpha=2^{-j}$, the largest $\alpha d$ rescaled eigenvalues fill the interval $[t_c(\alpha),b_c]$ with the density $\rho_c$, and $\xi_0\to b_c$; hence $d^{-1}\sum_{x<\alpha d}\sqrt{\xi_x}\to I_c(\alpha)$ in probability, and the branch of rank $\alpha d$ tends to $\Phi_c(\alpha)$. The maximum over the $\nu+1$ branches can be exchanged with the limit because of the bound
  \begin{equation}
   \max_{j\ge J}\frac{2^j}{d}\Bigl(\sum_{x<d/2^j}\mu_x\Bigr)^2\le2^{-J}d\lambda_0 ,
   \label{sm:eq:Haar-uniform-tail}
  \end{equation}
whose right-hand side tends to $2^{-J}b_c$: for fixed $J$ the finitely many head branches converge jointly, and $J$ is then taken large. Hence $F_{\rm NL}(\ket{\Psi_{d,D}})\xrightarrow{\Pr}F_{\rm H}(c):=\max_{j\ge0}\Phi_c(2^{-j})$.

\paragraph{Only two branches survive.}
 If $X$ has density $\rho_c$ and $E_\alpha=\{X\ge t_c(\alpha)\}$ is the event that $X$ lies among the largest fraction $\alpha$, the Cauchy--Schwarz inequality and $\operatorname{Var}X=c$ give
  \begin{equation}
   \Phi_c(\alpha)\le\mathbb E[X\boldsymbol1_{E_\alpha}]\le\alpha+\sqrt{c\alpha(1-\alpha)},
   \label{sm:eq:MP-covariance-bound}
  \end{equation}
which is at most $(1+\sqrt3)/4$ for $\alpha\le1/4$. The full-rank branch has the closed form
  \begin{equation}
   \Phi_c(1)=\Bigl[{}_2F_1\!\bigl(-\tfrac12,\tfrac12;2;c\bigr)\Bigr]^2,
   \label{sm:eq:full-hypergeometric}
  \end{equation}
whose square root has derivative $-\tfrac18{}_2F_1(\tfrac12,\tfrac32;3;c)<0$, so that $\Phi_c(1)\ge\Phi_1(1)=64/(9\pi^2)>(1+\sqrt3)/4$. Consequently $F_{\rm
 H}(c)=\max\{\Phi_c(1),\Phi_c(1/2)\}$.

\paragraph{Balanced cut.}
At $c=1$, the substitution $x=4\cos^2\varphi$ gives $\alpha=(2\varphi-\sin2\varphi)/\pi$ and $I_1(\alpha)=\tfrac{8}{3\pi}\sin^3\varphi$, so that, with $\varphi_\star$ as in Methods, $\Phi_1(1/2)=\tfrac{128}{9\pi^2}\sin^6\varphi_\star =0.844388\ldots>\Phi_1(1)=\tfrac{64}{9\pi^2}$; this is the $\Delta=0$ case of Eq.~\eqref{eq:Haar-limit-main}.

\paragraph{Imbalanced cuts.}
For $\Delta\ge1$ one has $c=2^{-\Delta}\le1/2$. Euler's integral representation of Eq.~\eqref{sm:eq:full-hypergeometric} reads $\sqrt{\Phi_c(1)}=\mathbb E\sqrt{1-cT}$ with $T\sim\operatorname{Beta}(\tfrac12,\tfrac32)$ and $\mathbb ET=\tfrac14$, so the chord bound for the concave square root gives $\Phi_c(1)\ge[(3+\sqrt{1-c})/4]^2$, whereas Eq.~\eqref{sm:eq:MP-covariance-bound} gives $\Phi_c(1/2)\le(1+\sqrt c)/2$. At $c\le1/2$ the former exceeds the latter, since $[(3+1/\sqrt2)/4]^2-(1+1/\sqrt2)/2=(1-1/\sqrt2)^2/16>0$. Thus the full-rank branch wins, and Eq.~\eqref{sm:eq:full-hypergeometric} is the $\Delta\ge1$ case of Eq.~\eqref{eq:Haar-limit-main}.

Since $0\le F_{\rm NL}\le1$, convergence in probability implies $\mathbb EF_{\rm NL}\to F_{\rm H}(c)$. Moreover, $F_{\rm NL}(\ket\Psi)=\sup_{\phi\in\cA_{\rm NL}}|\braket{\phi|\Psi}|^2$ is a supremum of squared overlaps, each of which changes by at most $2\bigl\|\ket\Psi-\ket{\Psi'}\bigr\|$ between unit vectors, so $|F_{\rm NL}(\ket\Psi)-F_{\rm NL}(\ket{\Psi'})|\le2\bigl\|\ket\Psi-\ket{\Psi'}\bigr\|$. L\'evy's lemma therefore gives $\Pr\bigl(|F_{\rm NL}-\mathbb EF_{\rm NL}|>\varepsilon\bigr)\le2\exp[-\Omega(dD\,\varepsilon^2)]$, and since $d/D\to c>0$ this is $\exp[-\Omega(d^2\varepsilon^2)]$. Finally, $F_{\rm H}(c)>0$, and on the exponentially rare event $F_{\rm NL}<F_{\rm H}(c)/2$ the universal bound $D_{\min}^{\rm NL}\le\log_2(\nu+1)$ applies, while on its complement $-\log_2$ is Lipschitz. Consequently $\mathbb ED_{\min}^{\rm NL}\to-\log_2F_{\rm H}(c)$: typical and mean values have the same limit, as the averages in Fig.~\ref{fig:classes} illustrate.

\section{Critical spectra from the Calabrese--Lefevre ansatz}
\label{sm:CL}

We use the notation of Methods and write $\widehat\Phi_b(w):=G_b(bw^2)^2/N_b(bw^2)$ for the continuous-rank branch at depth $u=bw^2$. Uniformly on compact subsets of $0<w<2$,
\begin{align}
   N_b(bw^2)
   &=\frac{e^{2bw}}{2\sqrt{\pi b}\,w^{1/2}}\,[1+O(b^{-1})],
   \label{sm:eq:CL-count-asymptotic}\\
   G_b(bw^2)
   &=\frac{e^{b(-1/2-w^2/2+2w)}}{\sqrt{\pi b}\,w^{1/2}(2-w)}\,[1+O(b^{-1})].
   \label{sm:eq:CL-amplitude-asymptotic}
\end{align}
The first line is the Bessel asymptotic $I_\nu(z)=e^z(2\pi z)^{-1/2}[1+O(z^{-1})]$ applied to $N_b=I_0(2bw)$. For the second, the substitution $v=bs^2$ turns the integral defining $G_b$ into $\sqrt{b/\pi}\int_0^w s^{-1/2}e^{b(2s-s^2/2)}[1+O(b^{-1})]\,ds$; since the exponent increases for $s<2$, the integral is dominated by its upper endpoint, $\int_0^w e^{b\varphi(s)}g(s)\,ds=e^{b\varphi(w)}g(w)/[b\varphi'(w)]\,[1+O(b^{-1})]$ with $\varphi'(w)=2-w$, which produces the factor $(2-w)^{-1}$ and restricts the expansion to $w<2$. Squaring the second line and dividing by the first gives Eq.~\eqref{eq:CL-envelope-methods}.

To control the tails, for $w\le\tfrac14$ the bound  $\widehat\Phi_b(w)\le e^{-b}N_b(bw^2)$ together with $I_0(x)\le e^x$, and for $w\ge\tfrac74$ the exact total amplitude $G_b(\infty)=R_{1/2}=e^{3b/2}$ together with $I_0(x)\ge e^x/(2\sqrt{2\pi x})$ for $x\ge1$ (from the integral representation of $I_0$)d, give
  \begin{equation}
   \begin{aligned}
   \widehat\Phi_b(w)
   &\le e^{-b+2bw}\le e^{-b/2},
   & w&\le\tfrac14,\\
   \widehat\Phi_b(w)
   &\le C\sqrt{bw}\,e^{b(3-2w)}
   \le C'\sqrt b\,e^{-b/2},
   & w&\ge\tfrac74,
   \end{aligned}
   \label{sm:eq:CL-tails}
  \end{equation}
where $C,C'>0$ are independent of $b$ and $w$ for all sufficiently  large $b$. Both bounds are exponentially smaller than the saddle value $2/\sqrt{\pi b}$.

On the remaining compact interval, the Gaussian factor in  Eq.~\eqref{eq:CL-envelope-methods} exponentially suppresses the complement of every fixed neighborhood of $w=1$. A differentiable uniform expansion gives $\partial_w\ln\widehat\Phi_b(w) =-2b(w-1)-1/(2w)+2/(2-w)+O(b^{-1})$ and $\partial_w^2\ln\widehat\Phi_b(w)=-2b+O(1)$. Hence $w_\star=1+3/(4b)+O(b^{-2})$, with the value quoted in Methods. Since $\frac{d}{dw}\ln N_b(bw^2)=2b+O(1)$ near $w=1$, shifting $\ln r$ by at most $\tfrac12\ln2$ moves $w$ by $O(b^{-1})$. The first derivative vanishes at the saddle and the logarithmic curvature is $O(b)$, so this changes the value only by a relative $O(b^{-1})$. Finally, $u_\star=b+O(1)$.

The strict Bessel spectrum assumes that the nonuniversal prefactor of  the CFT moments is independent of the R\'enyi index~\cite{Calabrese2008spectrum}. The leading result persists for
\begin{equation}
   R_\alpha=f(\alpha)\,e^{-b(\alpha-1/\alpha)}[1+o(1)],
   \label{sm:eq:CL-prefactor}
\end{equation}
provided the saddle inversion and its differentiated remainder hold uniformly in a complex neighborhood of the relevant saddles, with $f$ analytic and nonzero there, $f(1)=1$, and all branches outside a fixed neighborhood of $w=1$ remaining negligible. Under these assumptions, saddle inversion multiplies the envelope by $f(1/w)$. Its value at $w=1$ is fixed by normalization, while its derivatives shift $\ln r_\star$ only by $O(1)$. Smoothness on the real axis alone, or fixed-$\alpha$ CFT scaling without such control, is not sufficient.

\section{Entanglement embezzlement: finite-error bounds and examples}
\label{sm:embezzlement}

Here, we relate $F_{\rm NL}$ to finite LOCC conversion errors, evaluate the canonical
harmonic catalysts, and separate LOCC from local-unitary embezzlement.

\subsection{Finite-error operational bounds}

Let $T_{\rm LOCC}(\rho\to\sigma):=\inf_{\Lambda\in{\rm LOCC}} \tfrac12\|\Lambda(\rho)-\sigma\|_1$.  For pure states, the trace-star distance $T_\star$ is the minimum total-variation distance between the target Schmidt vector and an output Schmidt vector that majorizes the source vector~\cite{Nielsen1999Majorization}, and Ref.~\cite{ZanoniTheurerGour2024} proved $\tfrac12T_\star^2\le T_{\rm LOCC}\le\sqrt{2T_\star}$ together with an exact partial-sum formula for $T_\star(\chi\to\chi\otimes\Phi_1)$~\cite{ZanoniTheurerGour2024}.  For a catalyst $\chi$ with ordered Schmidt probabilities, every term of that formula is at most $\eta_{\rm emb}$, while the term associated with a maximizing octave is at least $\eta_{\rm emb}/2$.  Therefore, with $C=3+2\sqrt2$,
\begin{equation}
 \frac{F_{\rm NL}}{2C}
 \le T_\star(\chi\to\chi\otimes\Phi_1)
 \le F_{\rm NL},
 \label{sm:eq:fidelity-star}
\end{equation}
and
\begin{equation}
 \frac{F_{\rm NL}^2}{8C^2}
 \le T_{\rm LOCC}(\chi\to\chi\otimes\Phi_1)
 \le\min\{1,\sqrt{2F_{\rm NL}}\}.
 \label{sm:eq:fidelity-trace}
\end{equation}
For a target $\sigma$ preparable from the rank-$m$ maximally entangled
state, in particular any target supported on local dimensions at most
$m$, the star-distance triangle inequality gives
$T_{\rm LOCC}(\chi\to\chi\otimes\sigma)
\le\min\{1,\sqrt{2\lceil\log_2m\rceil\,F_{\rm NL}}\}$.  The nonlocal stabilizer fidelity is thus constant-factor equivalent to the trace-star defect, but only polynomially related to the physical trace error.

\subsection{Canonical harmonic catalysts}

For the harmonic spectrum $\lambda_x=[(x+1)H_d]^{-1}$, $0\le x<d$, the weight of the octave $[\ell,2\ell]$ is $(H_{2\ell+1}-H_\ell)/H_d$, which is largest at $\ell=0$; hence $\eta_{\rm
  emb}(\Gamma_d)=1/H_d$ exactly. The branch of rank $q$ is $(qH_d)^{-1}\bigl(\sum_{i=1}^{\min\{q,d\}}i^{-1/2}\bigr)^2$, and the integral estimates
  \begin{equation}
   2\bigl(\sqrt{q+1}-1\bigr)\le\sum_{i=1}^{q}i^{-1/2}<2\sqrt q
   \label{sm:eq:vDH-sum-bounds}
  \end{equation}
  show that every branch, including the one at the padded rank $2^{\lceil\log_2d\rceil}$, lies below $4/H_d$, while the branch at the largest dyadic rank $q\le d$ tends to $4/H_d$. Hence
  $F_{\rm NL}(\Gamma_d)=4[1+o(1)]/H_d$, i.e. $D_{\min}^{\rm NL}(\Gamma_d)=\log_2\ln d-2+o(1)$, for every $d$ and not only along a dyadic subsequence. The dyadic shells carry $p_0=1/H_d$
  and $p_k=(H_{2^k}-H_{2^{k-1}})/H_d=[\ln2+O(2^{-k})]/H_d$ for $1\le k\le\lfloor\log_2d\rfloor$: the canonical catalyst spreads its Schmidt weight almost uniformly over logarithmic rank
  scales, which is why it lies within $O(1)$ of the entropy bound in Fig.~\ref{fig:classes}. It is one explicit universal embezzling family, not a characterization of all of
  them~\cite{LeungWang2014embezzling}.

\subsection{LOCC versus local-unitary embezzlement}

We construct a family that is universal under LOCC but fails to embezzle fails to embezzle one Bell pair with the local-unitary protocol of Eq.~\eqref{eq:embezzlement-protocol}.
Fix $0<a<1/3$, take $n\ge3$, and assign to the dyadic shells $D_0=\{0\}$ and $D_k=\{2^{k-1},\ldots,2^k-1\}$, $k=1,\ldots,n-1$, the probabilities
\begin{equation}
 \begin{aligned}
 \lambda_x^{(n)}&=\frac{w_k}{|D_k|}\quad(x\in D_k),\\
 w_k&=\frac{1+a(-1)^k}{Z_n},\qquad
 Z_n=\sum_{k=0}^{n-1}[1+a(-1)^k],
 \end{aligned}
 \label{sm:eq:alternating-spectrum}
\end{equation}
with $Z_n=n$ for even $n$ and $Z_n=n+a$ for odd $n$.  The spectrum is decreasing because the only adverse boundaries, from odd to even $k$, obey $(1+a)/[2(1-a)]<1$, and every octave meets at most two adjacent shells, so $(1+a)/Z_n\le\eta_{\rm emb}\le2/Z_n$: the family is universal under LOCC by the criterion of Ref.~\cite{ZanoniTheurerGour2024} and has $F_{\rm NL}=\Theta(n^{-1})$.  
For the local-unitary protocol, Lemma~\ref{lem:Schmidt-overlap} gives the optimal root fidelity for extracting one Bell pair as
\begin{equation}
 \begin{aligned}
 \mathfrak f_n^{\rm loc}
 &=\sum_{x=0}^{2^{n-1}-1}
 \sqrt{\frac{\lambda_x^{(n)}
 \lambda_{\lfloor x/2\rfloor}^{(n)}}{2}}\\
 &=\frac{(1+a)/\sqrt2+\sqrt{(1-a^2)/2}+(n-2)\sqrt{1-a^2}}{Z_n},
 \end{aligned}
 \label{sm:eq:local-isometry-overlap}
\end{equation}
which tends to $\sqrt{1-a^2}<1$: the trace-distance error tends to $a$. Thus $F_{\rm NL}\to0$ characterizes LOCC embezzlement but does not supply the scale covariance required by the local-unitary protocol.  This gives an elementary explicit finite-dimensional instance of the separation discussed in Ref.~\cite{Luijk2025multipartite}.

\section{Prime-dimensional qudits}
\label{sm:qudits}
We fix the onsite dimension $p$ to be prime number, $\nu=\min(n_A,n_B)$, and write the decreasing, zero-padded Schmidt amplitudes as $\mu_x$, $0\le x<p^\nu$. 
For $x=\sum_{\ell=0}^{\nu-1}x_\ell p^\ell$, let $\ket{[x]_p}_P$ denote the digit string $(x_0,\ldots,x_{\nu-1})\in\mathbb F_p^{\nu}$ encoded on $\nu$ designated qudits of party $P=A,B$, with $\ket0$ on every remaining qudit.
Let $\cC_N^{(p)}$ denote the prime-qudit Clifford group, whose orbit $\cC_N^{(p)}\ket{0^{\otimes N} }$ is the set of pure prime-qudit stabilizer states, and define the nonlocal stabilizer fidelity for qudits as
  \begin{equation}
   F_{\rm NL}^{(p)}( \ket{\Psi} ):=
   \max_{U_A,U_B}\max_{C\in\cC_N^{(p)}}
   \left|\bra{0^{\otimes N}}C^\dagger(U_A\otimes U_B)\ket\Psi\right|^2.
   \label{sm:eq:prime-fidelity-definition}
  \end{equation}

  \begin{theorem}[Prime-qudit formula]
  \label{sm:thm:prime-qudit}
  \begin{equation}
   F_{\rm NL}^{(p)}(\ket{\Psi})
   =\max_{0\le k\le\nu}
   \frac1{p^k}\left(\sum_{x=0}^{p^k-1}\mu_x\right)^2,
   \label{sm:eq:prime-formula}
  \end{equation}
  and the sorted base-$p$ representative $\ket{\Psi_{\rm CB}^{(p)}}:=\sum_{x=0}^{p^\nu-1}\mu_x\ket{[x]_p}_A\ket{[x]_p}_B$ maximizes the stabilizer fidelity over all local unitaries
  $U_A,U_B$.
  \end{theorem}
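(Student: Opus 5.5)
We will repeat the two-sided argument used for Theorem~\ref{thm:main} in Methods, replacing qubits by qudits of prime dimension $p$. Lemma~\ref{lem:Schmidt-overlap} does not depend on the local dimension, since it only uses singular values and the von Neumann trace inequality. So the whole proof reduces to one stabilizer-specific fact: the bipartite normal form of prime-qudit stabilizer states.

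\textbf{Upper bound.} First we establish that, across any cut, every pure prime-qudit stabilizer state can be reduced by party-local Clifford circuits to $k$ generalized Bell pairs $p^{-1/2}\sum_{j\in\mathbb F_p}\ket{j}\ket{j}$ together with $\ket0$ spectators, for some $0\le k\le\nu$. As a result its Schmidt vector is flat with exactly $p^k$ nonzero entries $p^{-k/2}$.

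This is the step we expect to be the main obstacle. For qubits it was cited from Ref.~\cite{Fattal2004}. For prime $p$ we would prove it through the $\mathbb F_p$-symplectic structure. Let $\mathsf S$ be the stabilizer group, of order $p^N$. Let $\mathsf S_A=\{g\in\mathsf S:\ g\text{ acts trivially on }B\}$, and define $\mathsf S_B$ in the same way. Because $p$ is prime, $\mathbb F_p$ is a field, so every subgroup of the Pauli group modulo phases is an $\mathbb F_p$-vector space and its order is a power of $p$.

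The reduced state is $\rho_A=p^{-n_A}\sum_{g\in\mathsf S_A}g|_A$. Since $\mathsf S_A$ is an abelian group of commuting Paulis, $p^{-|\mathsf S_A|_{\log}}\sum_{g\in\mathsf S_A} g$ is a projector. Hence $\rho_A$ is proportional to a projector of rank $p^{n_A}/|\mathsf S_A|$, which is a power of $p$.

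Alternatively, one can use the known entanglement formula $S(\rho_A)=(\dim\mathsf S-\dim\mathsf S_A-\dim\mathsf S_B)/2\cdot\log p$. Either route gives a flat spectrum of rank $p^k$. We only need the flatness and the rank $p^k$, not the full normal form.

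Given this, Lemma~\ref{lem:Schmidt-overlap} evaluates the best overlap with any stabilizer state of Schmidt rank $p^k$ as $p^{-k}(\sum_{x<p^k}\mu_x)^2$. Exchanging the two maximizations, over local unitaries and over stabilizer states, then gives $F^{(p)}_{\rm NL}\le\max_k p^{-k}(\sum_{x<p^k}\mu_x)^2$.

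\textbf{Lower bound.} We map the ordered Schmidt bases to $\ket{[x]_p}_A$ and $\ket{[x]_p}_B$, extending the maps arbitrarily on the larger party, exactly as in Methods. This produces $\ket{\Psi^{(p)}_{\rm CB}}$. As witnesses we take $\ket{\Phi_k^{(p)}}=p^{-k/2}\sum_{x<p^k}\ket{[x]_p}_A\ket{[x]_p}_B$. This state consists of generalized Bell pairs on the $k$ least significant digits, prepared by Fourier and SUM gates from $\ket0$, with $\ket0$ on all other qudits, so it is a stabilizer state.

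The labels $x<p^k$ are exactly the digit strings whose high digits vanish. These supports are nested initial segments of a single ordering, so one representative serves every $k$ at once. The overlap is $|\langle\Phi^{(p)}_k|\Psi^{(p)}_{\rm CB}\rangle|^2=p^{-k}(\sum_{x<p^k}\mu_x)^2$.

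Combining the two bounds gives Eq.~\eqref{sm:eq:prime-formula}, and shows that the sorted base-$p$ representative is globally optimal.
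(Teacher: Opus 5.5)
Your proof is correct and has the same architecture as the paper's. The upper bound applies Lemma~\ref{lem:Schmidt-overlap} to flat rank-$p^k$ stabilizer spectra, and the lower bound uses the sorted base-$p$ representative with the nested generalized-Bell witnesses. The one place you differ is how flatness is justified. The paper simply cites the prime-qudit bipartite normal form (local-Clifford equivalence to $k$ generalized Bell pairs plus $\ket0$ spectators). You instead derive flatness directly from $\rho_A=p^{-n_A}\sum_{g\in\mathsf S_A}g|_A$, which equals $|\mathsf S_A|p^{-n_A}$ times the projector $|\mathsf S_A|^{-1}\sum_{g\in\mathsf S_A}g|_A$ of rank $p^{n_A}/|\mathsf S_A|$. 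This uses only three facts: $\mathsf S_A$ is a group containing no nontrivial scalar, non-identity Paulis are traceless, and Lagrange's theorem gives $|\mathsf S_A|$ as a power of $p$. It is more self-contained, and it makes explicit that the full normal form is more than the theorem needs. Your ``alternative'' entropy formula, however, fixes only the entropy and does not by itself establish flatness, so it is the projector argument that carries this step. The constraint $k\le\nu$ then follows from the Schmidt rank being at most $p^\nu$.
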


\begin{proof}
The proof of Theorem~\ref{thm:main} in Methods uses only two stabilizer-specific inputs: stabilizer Schmidt spectra are flat of dyadic rank, and every allowed rank is realized by a Bell-pair state. Their prime-$p$ analogues hold, and the remaining steps carry over verbatim with $2\to p$.

The first analogue follows from the prime-qudit normal  form~\cite{LooiGriffiths2011}: across any cut, a stabilizer state is equivalent under party-local Clifford operations to $k$ generalized Bell pairs and $\ket0$ spectators, for some $0\le k\le\nu$. Its Schmidt spectrum is therefore flat of rank $p^k$. Maximizing the fixed-rank overlap supplied by Lemma~\ref{lem:Schmidt-overlap} over all stabilizer states gives the upper bound in Eq.~\eqref{sm:eq:prime-formula}.

For the second analogue, $\ket{\Phi_{p,k}}:=p^{-k/2}\sum_{x=0}^{p^k-1} \ket{[x]_p}_A\ket{[x]_p}_B$ consists of $k$ generalized Bell pairs on the designated qudits carrying the $k$ least significant digits and $\ket0$ elsewhere. It is therefore a stabilizer state of Schmidt rank $p^k$ for every $0\le k\le\nu$. As in Methods, choose an ordered Schmidt decomposition padded with zero amplitudes and complete the associated Schmidt vectors where necessary. Local unitaries map these vectors to the labels $\ket{[x]_p}_A,\ket{[x]_p}_B$ and are extended arbitrarily on the larger party. Thus $\ket{\Psi_{\rm CB}^{(p)}}$ lies in the local-unitary orbit of $\ket\Psi$, and its squared overlap with $\ket{\Phi_{p,k}}$ is the $k$th term of Eq.~\eqref{sm:eq:prime-formula}. The upper and lower bounds coincide, proving Eq.~\eqref{sm:eq:prime-formula} and showing that $\ket{\Psi_{\rm CB}^{(p)}}$ simultaneously saturates every Schmidt-rank sector.
  \end{proof}

\end{document}